\newlength{\algaofontsize}
\newtheorem{theorem}{Theorem}[section]
\newtheorem{lemma}[theorem]{Lemma}
\theoremstyle{definition}
\newcommand{\defas}{\ensuremath{\coloneqq}}
\newcommand{\N}{\mathbb{N}}
\newcommand{\calA}{\mathcal{A}}
\newcommand{\calS}{\mathcal{S}}
\newcommand{\calX}{\mathcal{X}}
\newcommand{\bfA}{\mathbf{A}}
\newcommand{\bfX}{\mathbf{X}}
\newcommand{\bfv}{\mathbf{v}}
\newcommand{\bfx}{\mathbf{x}}
\newcommand{\bfz}{\mathbf{z}}
\newcommand{\alg}{\textsc{Alg}}
\newcommand{\opt}{\textsc{Opt}}
\newcommand{\algs}{\textsc{Select}}
\newcommand{\alga}{\textsc{Assign}}
\newcommand{\gen}{\textsc{Gen}}
\newcommand{\gena}{\textsc{Gen}^{\textnormal{AS}}}
\newcommand{\score}{\sigma}
\newcommand{\selected}{Z}
\newcommand{\ie}{i.e.,\xspace}
\newcommand{\indx}{\ensuremath{t}}
\newcommand{\disjoint}{\ensuremath{\mathbin{\dot\cup}}}
\newcommand{\range}[1]{\ensuremath{{[#1]}_0}}
\DeclareMathOperator*{\argmax}{arg\,max}
\DeclareMathOperator*{\argmin}{arg\,min}
\definecolor{darkred}{rgb}{0.545,0,0}
\definecolor{myorange}{rgb}{1,0.549,0}
\definecolor{navy}{rgb}{0,0,0.502}
\newcommand{\cola}{\textcolor{gray!75}{0}}
\newcommand{\colb}{\textcolor{navy}{1}}
\newcommand{\colc}{\textcolor{myorange}{2}}
\newcommand{\cold}{\textcolor{darkred}{3}}
\newcommand{\Description}[1]{}
\begin{document}

\title{Deterministic Impartial Selection with Weights}

\author{
   Javier Cembrano\textsuperscript{1}
   \and
   Svenja M.\ Griesbach\textsuperscript{1}
   \and
   Maximilian J.\ Stahlberg\textsuperscript{1,2}
}
\date{
    \medskip
        \small \textsuperscript{1}Technische Universit\"{a}t Berlin, Germany \\
        \small \texttt{\(\{\)cembrano,\;griesbach\(\}\)@math.tu-berlin.de} \\
    \medskip
        \small \textsuperscript{2}Technische Universit\"{a}t Hamburg, Germany \\
        \small \texttt{maximilian.stahlberg@tuhh.de} \\
}

\hypersetup{
    pdfauthor = {Javier Cembrano, Svenja M. Griesbach, Maximilian J. Stahlberg},
    pdfsubject = {Selecting a subset of agents who nominate each other deterministically and in a strategy-proof manner},
    pdfkeywords = {impartial selection, mechanism design, approximation algorithms},
}

\maketitle

\begin{abstract}
    In the impartial selection problem, a subset of agents up to a fixed size $k$ among a group of $n$ is to be chosen based on votes cast by the agents themselves.
    A selection mechanism is \emph{impartial} if no agent can influence its own chance of being selected by changing its vote.
    It is \emph{$\alpha$-optimal} if, for every instance, the ratio between the votes received by the selected subset is at least a fraction of $\alpha$ of the votes received by the subset of size $k$ with the highest number of votes.
    We study deterministic impartial mechanisms in a more general setting with arbitrarily weighted votes and provide the first approximation guarantee, roughly $1/\lceil 2n/k\rceil$.
    When the number of agents to select is large enough compared to the total number of agents, this yields an improvement on the previously best known approximation ratio of $1/k$ for the unweighted setting.
    We further show that our mechanism can be adapted to the impartial assignment problem, in which multiple sets of up to $k$ agents are to be selected, with a loss in the approximation ratio of $1/2$.
\end{abstract}

\section{Introduction}

Votes and referrals are a key mechanism in the self-organization of communities: political parties elect their representatives, researchers review and rate each other's manuscripts, and hyperlinks on the web attribute topical relevance to an external resource.
Oftentimes, the agents who give the recommendations are themselves interested in being within a top-rated fraction of their group: to occupy a prestigious position, be invited to a conference, or to have a website appear more prominently in search results.
Objectives like these provide an incentive to deviate from a fair evaluation of one's peers.
In particular, agents might omit a recommendation for an immediate contender in order to be ranked above them when the votes are counted.

In a seminal work, \citet{alon2011sum} initiated the search for impartial mechanisms to aggregate the votes cast by \(n\)~agents who want to elect \(k\)~individuals among them, which we refer to as the exact \((n, k)\)-selection problem.
The authors require that no agent is able to influence their own chance of being selected by adjusting the subset of peers that they vote for, while, at the same time, the agents selected by the mechanism should receive an expected sum of votes that is close to that of the highest voted subset of size~\(k\).
We refer to the first condition as \emph{impartiality} and to the second as \emph{\(\alpha\)-optimality}, where \(\alpha \in [0, 1]\) denotes the performance guarantee.
If the mechanism is allowed to make use of randomness and agents may vote for any subset of their peers, then the best known performance guarantee is \[\frac{k}{k+1}\left(1-{\left(\frac{k-1}{k}\right)}^{k+1}\right),\] which gives~\(1/2\) for the selection of a single agent and approaches~\(1 - 1/e\) as \(k \to \infty\)~\citep{bjelde2017impartial}.
It is also known that no impartial mechanism can be better than \(k / (k + 1)\)-optimal, which is tight only for \(k = 1\).
We discuss variants with a limited number of votes per participant as related work.

The problem only becomes more difficult in the deterministic setting, where the mechanism is forced to choose one agent over another even for highly symmetric input.
The instance in which two agents vote for each other and one of them shall be selected requires the mechanism to break the tie, based on an external preference list, in favor of one of the agents.
Impartiality demands that the same agent must be selected also when the other agent withdraws its vote.
But then, an agent with no votes is selected, even though the other agent still receives one.
This yields a performance guarantee of zero for the selection of a single agent in the worst case.
Even for \(k > 1\), no positive performance guarantee is possible~\citep{alon2011sum}, unless, surprisingly, when the mechanism is allowed to select less than \(k\) agents in some instances.
In this case an algorithm achieving \(\alpha = 1/k\) is known~\citep{bjelde2017impartial}.
We refer to this relaxation as the \emph{inexact} \((n, k)\)-selection problem.
Since this insight, the gap towards the best known upper bound, which is \((k-1)/k\) in the inexact selection setting, remained remarkably wide.

More generally, the selection problem allows for votes to be weighted: one then compares the total weight of the selected agents to that of the maximum-weight subset of size $k$.
In a peer review setting, reviewers are often asked to rate the manuscript under consideration on a point scale that ranges from a recommendation to reject to a claim of excellence.
An editor or program chair would then aggregate these scores and accept a limited number of highly rated submissions.
While the established rule to disclose any conflicts of interest protects, if obeyed, against abuse based on personal ties, authors whose papers are on the verge of selection might still profit from giving ratings below their honest estimate, unless the selection mechanism is impartial.
In this setting, although computational studies have been made~\citep{aziz2019strategyproof}, no deterministic mechanism providing a worst-case guarantee was known to date.

\subsection{Our Contribution}

We propose a deterministic impartial mechanism for the inexact $(n,k)$-selection problem that can be applied in the weighted setting.
The agents selected by the mechanism receive at least a fraction of around $k/(2n)$ of the number of votes of the $k$ top-voted agents, whenever $k\geq 2\sqrt{n}$.
More precisely, the performance guarantee is \(\alpha = 1 / \lceil 2n/k \rceil\) for \(k \geq 2\sqrt{n}\) even and \(\alpha = (k - 1)/(k \lceil 2n / (k - 1) \rceil)\) for \(k \geq 2\sqrt{n} + 1\) odd.
For example, the mechanism asymptotically guarantees \(\alpha = 1/4\) when selecting at most half and \(\alpha = 1/3\) when selecting at most two thirds of the agents.
These are the first lower bounds for deterministic selection with weights.
In its applicable range, the mechanism further improves upon the previous best bound of \(1/k\) in the unweighted setting.
Notably, it selects at least $\lfloor k/2\rfloor$ candidates while the previous best mechanism selects either one or two agents, depending on the instance.
The improvement is most noticeable when $k$ is large, where the gap between the previously best known lower and upper bounds of \(1/k\) and \((k-1)/k\), respectively, has been widest.
The construction is best behaved whenever \(b \defas 2n/k \in \N\) and \(b \leq k/2 \in \N\): here a guarantee of \(\alpha = 1/b\) is provided and the analysis of the mechanism is tight.
The mechanism uses a well-structured set of partitions of the agents, whose existence we study in \Cref{sec:partitions} using a connection to hypergraph theory and graph coloring.
The mechanism itself and the proof of the approximation guarantee are presented in \Cref{sec:selection}.

In \Cref{sec:assignment}, we show how the mechanism can be adapted to assign agents to multiple size-bounded subsets.
For example, the participants of a scientific conference could be tasked to assign contributed presentations to thematic sessions or streams.
Some of these slots might be regarded as particularly prestigious, while others may be scheduled near the end of the conference, where there is little time left for discussion over coffee.
To ensure that recommendations focus on a good content-wise match, an impartial mechanism may be employed.
More generally, the size-bounded subsets might represent tasks to distribute, committees to form, or administrative roles to assign.
In this setting, our adjusted mechanism loses only a factor of~\(1/2\) in the performance guarantee, independent of the number of subsets to populate.

\subsection{Related Work}

Impartiality as a desirable axiom in multi-agent problems was introduced by \citet{de2008impartial} and first studied in the context of peer selection by both \citet{holzman2013impartial} and \citet{alon2011sum}: The work by \citeauthor{holzman2013impartial} studied the existence of impartial mechanisms satisfying further axioms such as unanimity and notions of monotonicity, while the research by \citeauthor{alon2011sum} showed that no deterministic impartial mechanism aiming to select exactly $k$ agents can achieve any constant approximation ratio.
Both works consider mechanisms that rely on partitioning the agents, which is also the basis of our mechanism.
While \citet{alon2011sum} use partitions only in the context of randomized mechanisms, \citeauthor{holzman2013impartial} employ them also for deterministic selection, although with different axioms than approximate optimality in mind.
In response to the impossibility result of \citeauthor{alon2011sum}, \citet{bjelde2017impartial} showed that when fewer than \(k\) agents may be selected, $1/k$-optimality is guaranteed by the \emph{bidirectional permutation} mechanism.
The authors further proved an upper bound of $(k-1)/k$ for any deterministic impartial mechanism.

Continuing the axiomatic line, \citet{tamura2014impartial} studied the $k$-selection problem in the single-nomination setting and showed that impartiality is compatible with two natural notions of unanimity.
Their mechanism was extended to the case of a higher, but constant, maximum number of nominations by \citet{cembrano2022optimal}.
Further, \citet{aziz2019strategyproof} proposed a mechanism satisfying certain monotonicity properties and confirmed its performance in a computational study.

Several works have focused on randomized impartial selection.
\citeauthor{alon2011sum} proposed a family of mechanisms based on a random partition of the agents that yield the first lower bounds on the approximation ratio for this setting, namely $1/4$ for $k=1$ and $1-O(1/\sqrt[3]{k})$ for general $k$.
They also provided respective upper bounds of $1/2$ and $1-\Omega(1/k^2)$.
\citet{fischer2015optimal} closed the gap for $k=1$ by giving a $1/2$-approximation algorithm.
\citet{bousquet2014near} designed a mechanism with an approximation guarantee that goes to one as the maximum score of an agent goes to infinity.
A restricted variant of particular importance, first studied in the work of~\citeauthor{holzman2013impartial}, arises when each agent can vote for exactly one other agent.
Here, \citeauthor{fischer2015optimal} provided both lower and upper bounds which were later improved by \citet{cembrano2023improved}.

A setting closely related to the impartial selection of $k$~agents is that of \emph{peer review} in which, in contrast to the classic $k$-selection problem, the votes are weighted and represent a score assigned to a submission.
\citet{kurokawa2015impartial} studied a model where first a limited number of weighted votes is sampled and then the selection is performed.
The authors proposed an impartial randomized mechanism providing a constant approximation ratio with respect to the (non-impartial) mechanism that randomly samples the votes and selects the best possible set of $k$ agents given these votes.\@ \citet{mattei2021peernomination} studied this problem from an axiomatic and experimental point of view, while \citet{lev2023peer} extended this work to the setting with noisy assessments.\@ \citet{dhull2022strategyproofing} explored the scope and limitations of partition-based mechanisms for peer review in terms of approximating the selection of the best $k$ papers.

Beyond multiplicative approximation, some works have studied the scope and limitations of impartial mechanisms in terms of additive guarantees \citep{caragiannis2022impartial, caragiannis2023prior, cembrano2022impartial} and additional economic axioms \citep{edelman2021new, MacK20a}.
Impartiality has also been considered for the selection of agents where preferences come from correlated types \citep{niemeyer2022simple}, for the selection of vertices in graphs with maximal progeny \citep{babichenko2020incentive, zhang2021incentive, zhao2023incentive}, and for generating social rankings of agents who rank each other \citep{kahng2018ranking}.
For a survey on incentive handling in peer mechanisms, see \citet{olckers2022manipulation}.

\section{Preliminaries}\label{sec:prelims}

For $n \in \N \defas \mathbb{Z}_{\geq 1}$, we define the ranges $[n] \defas \{1,\ldots,n\}$ and $\range{n} \defas \{0, \ldots, n - 1\}$, and we write \(\calA_n\) for the set of non-negative $n\times n$ matrices with zero diagonal.
An instance of the weighted selection problem is fully described by an integer $k$ and a weight matrix $A\in \calA_n$, where $k$ is the number of agents to be selected and $A_{ij}$ corresponds to the weight of the vote that agent $i$ casts for agent $j$.
For $A\in \calA_n$, we write $A_{-i}$ for the matrix obtained when removing the \(i\)-th row of \(A\).
Given \(A \in \calA_n\) and \(R,S \subseteq [n]\), we write
\[
    \score_R(S;A) \defas \sum_{\substack{i \in R,\; j \in S}} A_{ij}
\]
for the score of the agents in \(S\) limited to \(R\), and \(\score(S;A)\) short for \(\score_{[n]}(S;A)\).
We omit the weight matrix $A$ whenever it is clear from the context, and we write \(j\) short for \(S = \{j\}\) in the above definitions.
Let $n, k \in \N$ with $k < n$ in the following.
For $A \in \calA_n$, we let
\[
    \opt_k(A) \defas \argmax_{S \subseteq [n] \colon |S| = k} \score(S;A)
\]
denote an arbitrary set with the largest score among subsets of agents of size~\(k\).
We write just \(\opt_k\) when the weight matrix is clear.

An (inexact) $(n,k)$-selection mechanism is a function $f \colon \calA_n \to 2^{[n]}$ such that $|f(A)|\leq k$ for every $A\in \calA_n$.
Such a mechanism is \emph{impartial} if, for every pair of instances $A, A' \in \calA_n$ and for all $i\in [n]$ such that $A_{-i} = A'_{-i}$, it holds that $f(A) \cap \{i\} = f(A')\cap \{i\}$.
We further call an $(n,k)$-selection mechanism \emph{$\alpha$-optimal} if
\[
    \frac{\score(f(A);A)}{\score(\opt_k(A);A)} \geq \alpha
\]
holds for all $A \in \calA_n$ and some $\alpha \in [0,1]$.

We write \(E \disjoint F\) for the disjoint union of sets \(E\) and \(F\).
For a multiset~\(E\), we write \(\mu_E(e)\)
for the multiplicity of \(e \in E\) and \(\mu(E)\)~for the cardinality of~\(E\).

A hypergraph is a pair \(H = (V, E)\) where \(V\)~is a finite set of \emph{vertices} and where \(E \subseteq 2^V\) is a multiset of \emph{(hyper-)edges}. 
We say that \(H\) is \emph{\(d\)-regular} if each vertex is contained in exactly $d$ edges, \ie{} \(\mu\left(\{e \in E \mid v \in e\}\right) = d\) for all \(v \in V\);
\emph{\(b\)-uniform} if each edge contains exactly $b$ vertices, \ie{} \(|e| = b\) for all \(e \in E\); and \emph{linear} if two distinct edges intersect in at most one vertex, \ie{} \(|e_1 \cap e_2| \leq 1\) for all \(e_1, e_2 \in E\) with \(\mu_E(e_1) > 1\) or \(e_1 \neq e_2\).
The \emph{dual} of \(H\) is \(H^* = (E, X)\) where \(X \defas \{\{e \in E \mid v \in e\} \mid v \in V\}\) is a multiset of sets.
One may think of the dual graph in terms of the vertex--edge incidence matrix, which is transposed when taking the dual graph. 
Note that the dual graph may have repeated edges and loops even if the original graph does not have either.

We call a \(2\)-uniform hypergraph without repeated edges a (simple) graph.
For a graph \(G = (V, E)\), an edge \(b\)-coloring is a mapping \(\pi \colon E \to [b]\).
It is \emph{feasible} if \(\pi(e_1) \neq \pi(e_2)\) for all \(e_1, e_2 \in E\) with \(e_1 \cap e_2 \neq \emptyset\).
Likewise, a vertex \(b\)-coloring is a mapping \(\pi \colon V \to [b]\) that we call feasible if \(\pi(u) \neq \pi(v)\) for all \(u, v \in V\) such that \(u, v \in e\) for some \(e \in E\).

\section{Partition Systems}\label{sec:partitions}

The present work takes inspiration from the \emph{partition mechanism}.
This mech\-an\-ism was first proposed by \citet{alon2011sum} for the setting of randomized $(n,1)$-selection, and variants for selecting more than one agent have been studied by \citet{bjelde2017impartial}, \citet{aziz2019strategyproof}, and \citet{xu2019strategyproof}.
In its original formulation due to \citeauthor{alon2011sum}, the partition mechanism assigns each agent into a \emph{voter set}~$S_1$ and a \emph{candidate set}~$S_2$ uniformly at random.
It then considers only votes from agents in $S_1$ to agents in $S_2$ and selects an agent from $S_2$ with maximum score.
This mechanism is impartial as it considers only votes of agents with no chance of being selected, and it is $1/4$-optimal, intuitively, as we see every fourth vote in expectation.
The $(n,k)$-selection variant by \citet{bjelde2017impartial} partitions the agents into $k$ sets instead of two and selects one agent from each set that has the highest score from all other sets, additionally considering internal votes that are directed from left to right according to a random permutation of the agents.
This variant preserves impartiality and provides a guarantee that varies from $1/2$ to $1-1/e$ as $k$ grows from $1$ to infinity.

The partition mechanism, although achieving a good ratio when randomization is possible, performs poorly in the deterministic setting.
If agents are assigned in any fixed way, votes may be adversarially placed between agents in the same set (and opposite to the permutation of the agents if such a step is considered), so that the mechanism cannot do any better, in the worst case, than selecting agents with no votes, while the maximum score may be arbitrarily high.

In the following, we build the foundation for a partition-based $(n,k)$-selection mechanism that is robust against such adversarial placement of votes.
To achieve this, agents appear in the candidate set of more than one partition and with a disjoint set of contenders each time.
This way, votes not seen for a candidate agent in one partition will be seen in another partition wherein that agent re-appears as a candidate.
Of course, repeated candidacy may lead to the same agent being selected multiple times, at the expense of contenders with a high number of votes.
To minimize this possibility, we let every agent contest just twice and we remove duplicate votes.
As our goal is to select up to \(k\)~agents, we define \(k\)~such partitions.
For now, we make also the simplifying assumption that \(n\)~and~\(k\) allow the candidate sets to have equal size~\(b\).
This is without loss of generality as we may fill smaller partitions with dummy agents who cast and receive no votes and are disfavored when breaking ties.
We call a collection of partitions meeting these requirements a \emph{balanced partition system}.

A partition into voters and candidates is fully described by either set.
A balanced partition system may thus be written as a family~\(E\) of candidate subsets of the set of agents~\(V\) or, in other words, as a hypergraph \(H = (V, E)\) without repeated edges, where each $e \in E$ is the candidate set of a single partition.
To fulfill the requirements of a balanced partition system, \(H\)~has to be \(2\)-regular, so that every agent appears in exactly two candidate sets, and \(b\)-uniform, so that all candidate sets $e \in E$ have the same size \(|e| = b\).
The remaining requirement that no two agents compete twice against each other, formally \( |e_1 \cap e_2| \leq 1\) for all $e_1,e_2\in E$ with \( e_1 \neq e_2\),
translates to \(H\)~being linear.
The following lemma implies that we can represent a partition system further by a simple graph.

\begin{lemma}\label{lem:simple_dual}
    A hypergraph is \(2\)-regular and linear if and only if its dual is a simple graph.
\end{lemma}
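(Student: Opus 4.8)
The plan is to unfold the definition of the dual and split the claim into two elementary equivalences. By definition a simple graph is a \(2\)-uniform hypergraph without repeated edges, so it suffices to show that \(H\) is \(2\)-regular and linear exactly when \(H^*\) is \(2\)-uniform and has no repeated edges. The edges of \(H^* = (E, X)\) are the sub-multisets \(X_v \defas \{e \in E \mid v \in e\}\), one per vertex \(v \in V\), and the size of \(X_v\) is by definition the number of (multiset-)edges of \(H\) containing \(v\); hence \(H^*\) is \(2\)-uniform if and only if \(H\) is \(2\)-regular. Since both sides of the lemma force \(2\)-uniformity of the dual, hence \(2\)-regularity of \(H\), I would establish this first and assume \(2\)-regularity throughout the rest of the argument.

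Under \(2\)-regularity every edge \(X_v\) of \(H^*\) is a \(2\)-element sub-multiset \(\{e_1, e_2\}\) of the vertex multiset \(E\) — either two distinct subsets of \(V\), or two copies of a single subset. The heart of the proof is then the equivalence: the assignment \(v \mapsto X_v\) is injective (equivalently, \(H^*\) has no repeated edges) if and only if \(H\) is linear. For one direction, if \(X_u = X_v = \{e_1, e_2\}\) with \(u \neq v\), then \(u, v \in e_1\) and \(u, v \in e_2\), so \(|e_1 \cap e_2| \geq 2\); reading this through the definition of \emph{linear} — with the clause \(\mu_E(e_1) > 1\) covering the case where \(e_1\) and \(e_2\) are copies of the same subset of \(V\) — shows \(H\) is not linear. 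Conversely, if \(H\) is not linear there are two distinct members \(e_1, e_2\) of \(E\) with \(|e_1 \cap e_2| \geq 2\); picking distinct \(u, v \in e_1 \cap e_2\) and using \(2\)-regularity to conclude that \(e_1\) and \(e_2\) are \emph{the only} edges through \(u\) and the only edges through \(v\), one obtains \(X_u = \{e_1, e_2\} = X_v\), so \(X\) has a repeated edge.

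Chaining the two equivalences gives the lemma: \(H\) is \(2\)-regular and linear \(\iff\) \(H^*\) is \(2\)-uniform and without repeated edges \(\iff\) \(H^*\) is a simple graph. I expect the only real care to lie in the bookkeeping forced by \(E\) being a multiset: a repeated edge of \(H\) yields two \emph{distinct} vertices of \(H^*\), so it contributes an honest size-\(2\) edge to \(H^*\) rather than a loop, and the special clause in the definition of linearity is exactly what makes the non-linearity condition coincide with ``two distinct members of \(E\) intersect in at least two vertices''. Beyond this translation the argument needs no further combinatorics.
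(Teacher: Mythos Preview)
Your proposal is correct and rests on the same definition-chasing as the paper, but with a cleaner decomposition. The paper treats the two implications asymmetrically: for the forward direction it shows directly that a \(2\)-regular linear \(H\) has a simple dual, while for the converse it starts from a simple graph \(G\) and shows that \(G^*\) is \(2\)-regular and linear --- implicitly relying on \((H^*)^* = H\) to close the loop. You instead factor the statement into two standalone equivalences (\(2\)-regular \(\iff\) dual is \(2\)-uniform; under \(2\)-regularity, linear \(\iff\) dual has no repeated edges) and prove each directly on \(H\) and \(H^*\), which avoids the double-dual step entirely and makes the multiset bookkeeping --- in particular the role of the clause \(\mu_E(e_1) > 1\) in the definition of linearity --- more transparent. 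The underlying combinatorics is identical; your organisation is simply a touch more modular.
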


\begin{proof}
    Let \(H = (V, E)\) be a \(2\)-regular and linear hypergraph.
    Its dual graph is \(H^* = (E, X)\) where \(X = \{\{e \in E \mid v \in e\} \mid v \in V\}\) is a multiset of sets.
    We show that \(H^*\) is a graph, \ie{} that \(H^*\) is \(2\)-uniform and has no repeated edges.
    Let \(x \in X\).
    Then, \(x = \{e \in E \mid v \in e\}\) for some \(v \in V\).
    Since \(H\) is \(2\)-regular, we have \(|x| = 2\) for all \( x\in X\), so \(H^*\) is \(2\)-uniform.
    It remains to show that \(H^*\) has no repeated edges.
    Assume towards a contradiction that there is an \(x \in X\) with multiplicity at least two.
    Then, there are \(v_1 \neq v_2 \in V\) with \(x = \{e \in E \mid v_1 \in e\} = \{e \in E \mid v_2 \in e\}\).
    Since \(H\) is \(2\)-regular, again \(|x| = 2\) holds.
    Let thus \(e_1 \neq e_2 \in E\) with \(v_1, v_2 \in e_1\) and \(v_1, v_2 \in e_2\).
    Then, \(\{v_1, v_2\} \subseteq e_1 \cap e_2\), hence \(|e_1 \cap e_2| \geq 2\) contradicts that \(H\) is linear.

    Let next \(G = (V, E)\) be a graph.
    Its dual graph is a hypergraph \(G^* = (E, X)\) with \(X\) defined as before.
    We show that \(G^*\) is \(2\)-regular and linear.
    To this end, let \(e \in E\) be a vertex of \(G^*\).
    Since \(G\) is \(2\)-uniform, it is \(e = \{v_1, v_2\}\) with \(v_1 \neq v_2\) and thus
    \begin{align*}
        \deg_{G^*}(e)
        &= \mu\left(\{x \in X \mid e \in x\}\right) \\
        &= \mu\left(\{x \in \{\{f \in E \mid v \in f\} \mid v \in V\} \mid e \in x\}\right) \\
        &= \mu\left(\{\{f \in E \mid v \in f\} \mid v \in V \land e \in \{f \in E \mid v \in f\}\}\right) \\
        &= \mu\left(\{\{f \in E \mid v \in f\} \mid v \in V \land v \in e\}\right) \\
        &= \mu\left(\{\{f \in E \mid v_1 \in f\},  \{f \in E \mid v_2 \in f\}\}\right) = 2,
    \end{align*}
    so \(G^*\) is \(2\)-regular.
    Finally, assume towards a contradiction that \(G^*\) is not linear.
    Then, there are \(v_1 \neq v_2 \in V\) such that \(x_1 \coloneqq \{e \in E \mid v_1 \in e\} \in X\) and \(x_2 \coloneqq \{e \in E \mid v_2 \in e\} \in X\), possibly \(x_1 = x_2\) as \(X\) is a multiset, and \(|x_1 \cap x_2| \geq 2\).
    Since \(G\) is simple, \(E\) is a set, so there are \(e_1 \neq e_2 \in E\) with \(v_1, v_2 \in e_1\) and \(v_1, v_2 \in e_2\).
    Since \(G\) is \(2\)-uniform, \(e_1 = \{v_1, v_2\} = e_2\), a contradiction to \(G\) being simple.
\end{proof}

By \Cref{lem:simple_dual} and the fact that order and size as well as degree and rank are dual for hypergraphs, there is a one-to-one correspondence between balanced partition systems where \(n\)~agents are distributed among \(k\)~candidate sets of size~\(b\) on the one hand, and \(b\)-regular simple graphs of order~\(k\) and size~\(n\) on the other hand.
In the simple graph representation, edges correspond to agents while incident vertices correspond to candidate sets that the agents appear in.

In the analysis of the mechanism, we will bound the weight selected by it by that of a subset~\(U\) of top-voted agents that pairwise do not compete.
More precisely, \(U\)~will be a set of maximum weight among a partition of the \(k\)~top-voted agents into \(b\)~many subsets with this property.
If the mechanism does not select some agent~\(i\) from~\(U\), it makes up for the agent's score in the partitions that agent~\(i\) appears in as a candidate, which are pairwise disjoint for the agents in~\(U\).
This leads to a lower bound of \((k/b)/k = 1/b\), stated in \Cref{lem:bound-nice-n-k}.
To ensure the existence of \(b\)~such sets, we require that any subgraph of \(H\) induced by \(k\)~vertices can be partitioned into \(b\)~many (internally) independent sets.
We call a balanced partition system whose corresponding hypergraph has this property \emph{robust}.
In terms of the \(b\)-regular dual graph~\(G \defas H^*\), the condition is equivalent to the existence of an edge coloring with \(b\)~colors for every subgraph induced by \(k\)~edges: the edges of any one color do not share a vertex, which corresponds to vertices not sharing a hyperedge in~\(H\).
By K\H{o}nig's line coloring theorem~\citep{koenig1916}, a sufficient condition for such a coloring to exist is that \(G\)~is bipartite.%
\footnote{More generally, robust partition systems could be derived from any \(b\)-regular graph that is class~\(1\)~\citep{vizing1964}.}
The proofs of \Cref{lem:partitions,lem:bound-nice-n-k} will formalize these ideas.

Bipartite and \(b\)-regular graphs of even order \(k\) and size \(n\) exist for all \(b = 2n/k\) with \(b \leq k/2\).
A simple construction is depicted in \Cref{fig:graph} and described by the following lemma.

\begin{lemma}\label{lem:graph}
    Let \(b, k, n \in \N\) with \(k' \defas k/2 \in \N\) and \(b = 2n/k \leq k'\).
    Then, \(G = (V, E)\) with \(V \defas \range{k}\) and \(E \defas \left\{ \left\{ i, k' + \left( (i + \ell) \bmod k' \right) \right\} ~\middle|~ (i, \ell) \in \range{k'} \times \range{b} \right\}\) is a \(b\)-regular bipartite graph of order \(k\) and size \(n\).
\end{lemma}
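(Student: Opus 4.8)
The plan is to unwind the definition of $G$ and verify the four assertions --- bipartiteness, order, size, and $b$-regularity --- one at a time; the only non-routine point is an injectivity argument that pins down $|E| = n$ and that is exactly where the hypothesis $b \le k'$ is used.

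First I would fix the bipartition. Set $L \defas \range{k'}$ and $R \defas \{k', \ldots, k - 1\}$, so that $V = L \disjoint R$ and $|V| = k$, which gives the claimed order. By construction every edge has the form $e_{i,\ell} \defas \{i, k' + ((i + \ell) \bmod k')\}$ with $i \in \range{k'}$ and $\ell \in \range{b}$; its first endpoint lies in $L$ and, since $(i + \ell) \bmod k' \in \range{k'}$, its second endpoint lies in $R$. In particular the two endpoints are distinct, so $G$ has no loops and is a genuine $2$-uniform hypergraph, and no edge joins two vertices of $L$ or two vertices of $R$, so $G$ is bipartite with parts $L$ and $R$.

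Next, for the size, I would show that the map $(i, \ell) \mapsto e_{i, \ell}$ from $\range{k'} \times \range{b}$ onto $E$ is injective; since the domain has $k' b = (k/2)(2n/k) = n$ elements, this simultaneously yields $|E| = n$ and confirms that $G$ has no repeated edges. For injectivity, suppose $e_{i_1, \ell_1} = e_{i_2, \ell_2}$. Matching the unique $L$-endpoint of each edge gives $i_1 = i_2 =: i$; matching the $R$-endpoints gives $(i + \ell_1) \equiv (i + \ell_2) \pmod{k'}$, hence $\ell_1 \equiv \ell_2 \pmod{k'}$, and since $\ell_1, \ell_2 \in \range{b}$ with $b \le k'$ we conclude $\ell_1 = \ell_2$. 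This is the one place where $b \le k'$ enters, and it is the step to be careful about: without this hypothesis the construction would produce parallel edges and fail to be a simple graph of size $n$.

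Finally, $b$-regularity. For a left vertex $i \in L$, any incident edge is of the form $e_{i', \ell}$ with $i' = i$ (its $L$-endpoint), so the incident edges are exactly $e_{i, 0}, \ldots, e_{i, b-1}$, which are $b$ distinct edges by the injectivity above; hence $\deg_G(i) = b$. For a right vertex $j = k' + r$ with $r \in \range{k'}$, an edge $e_{i, \ell}$ is incident to $j$ iff $(i + \ell) \bmod k' = r$, i.e.\ iff $i = (r - \ell) \bmod k'$; so for each $\ell \in \range{b}$ there is exactly one admissible $i$, and distinct values of $\ell$ give distinct values of $i$ (again using $b \le k'$), hence distinct edges $\{i, j\}$. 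Thus $j$ has exactly $b$ incident edges and $\deg_G(j) = b$. Combining the four verifications proves the lemma. I do not expect any step to be genuinely difficult; the crux is simply to route the hypothesis $b \le k'$ through the injectivity claim so that all edge counts come out exactly.
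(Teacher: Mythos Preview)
Your proof is correct and follows essentially the same route as the paper's: both fix the bipartition \(\range{k'} \disjoint (\range{k}\setminus\range{k'})\), establish \(|E|=n\) via injectivity of \((i,\ell)\mapsto e_{i,\ell}\) using \(b\le k'\), and then verify \(b\)-regularity on each side by the same counting argument. The only cosmetic difference is that the paper first proves injectivity for ordered pairs and then passes to unordered ones, whereas you go directly to unordered pairs by matching the unique \(L\)-endpoint.
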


\begin{proof}
    The order is given by $|V|=k$.
    To see that $G$ is bipartite, let \(\{u, v\} \in E\).
    Then, \(\{u, v\} = \left\{i, j\right\}\) with \(j \defas k' + \left( (i + \ell) \bmod k' \right)\) for some \(i \in \range{k'}\) and \(\ell \in \range{b}\).
    It follows from
    \begin{equation}
        i < k' \leq k' + \left( (i + \ell) \bmod k' \right) = j\label{eq:bipartite}
    \end{equation}
    that \(u < k'\) if and only if \(v \geq k'\).
    Hence, a bipartition of the vertices is given by $V = V_1 \disjoint V_2$ with $V_1 = \range{k'}$ and $V_2 = \range{k} \setminus \range{k'}$.
    To prove that the size of $G$ is $n$, we first show that \(f \colon \range{k'} \times \range{b} \to V \times V\) with \[f(i, \ell) \defas \left( i, k' + \left( (i + \ell) \bmod k' \right) \right)\] is injective.
    To this end, let \(f(i, \ell) = f(i', \ell')\) for some \((i, \ell), (i', \ell') \in \range{k'} \times \range{b}\).
    Clearly, it is \(i = i'\), so it remains to show that \(\ell = \ell'\).
    This is the case as
    \begin{align}
        && k' + \left( (i + \ell) \bmod k' \right) &= k' + \left( (i + \ell') \bmod k' \right)\label{eq:injective-projection} \\
        \Longleftrightarrow~&& i + \ell &= i + \ell' \pmod{k'}\nonumber{} \\
        \Longleftrightarrow~&& \ell &= \ell' \pmod{k'}\nonumber{}
    \end{align}
    which holds since \(0 \leq \ell, \ell' < b \leq k'\).
    From inequality~\eqref{eq:bipartite}, it follows that also \(f' \colon \range{k'} \times \range{b} \to E\) with \(f'(i, \ell) = \{{f(i, \ell)}_1, {f(i, \ell)}_2\}\) is injective, so \(|E| = k' b = kb/2 = n\) as required.
    For the degree, we consider each bipartition $V_1$ and $V_2$ separately.
    For \(i \in V_1\), let \(f_i \colon \range{b} \to \N\) with \(f_i(\ell) \defas {f(i, \ell)}_2\) enumerate the neighbors of vertex~\(i\) and assume \(f_i(\ell) = f_i(\ell')\) for some \(\ell, \ell' \in \range{b}\).
    As this implies equation~\eqref{eq:injective-projection}, we have again that \(\ell = \ell'\), so also \(f_i\) is injective and \(\deg(i) = b\).
    For \(j \in V_2\), the degree of vertex~\(j\) is \(\deg(j) = \left| \{(i, \ell) \in \range{k'} \times \range{b} \mid f_i(\ell) = j \} \right|\).
    It is
    \begin{align}
        && f_i(\ell) &= j \nonumber{} \\
        \Longleftrightarrow~&&
        (i + \ell) \bmod k' &= j - k' \nonumber{} \\
        \Longleftrightarrow~&&
        i &= j - k' - \ell \pmod{k'} \label{eq:rhs-regularity}
    \end{align}
    where the last equivalence follows from \(0 \leq j - k' < k'\).
    \Cref{eq:rhs-regularity} has a unique solution \(i \in \range{k'}\) for any fixed \(\ell \in \range{b}\), so \(\deg(j) = b\) and \(G\) is \(b\)-regular.
\end{proof}

\begin{figure}[t]
    \centering
    \foreach \b in {1,...,4}{
        \begin{subfigure}{\textwidth/4 - 4pt} 
            \centering
            \tikzmath{\k = 8;}
            \begin{tikzpicture}
    \input{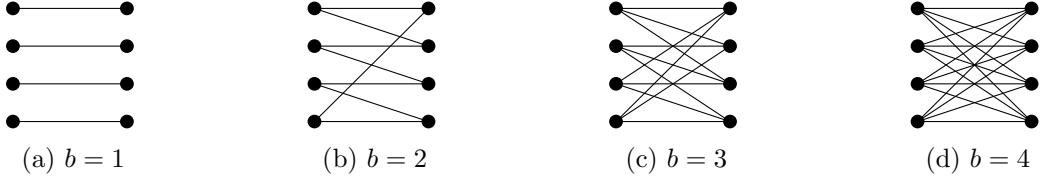}

    \tikzmath{
        coordinate \c, \d;
        \w = 1.5;
        \h = 0.5;
        \l = \k / 2;
        for \row in {0,...,\l-1} {
            for \col in {0,1} {
                \x = \col * \w;
                \y = \row * \h;
                \c = (\x, -\y);
                {
                    \node[vertex] at (\c) {};
                };
            };
        };
        for \row in {0,...,\l-1} {
            for \offset in {0,...,\b-1} {
                \y2 = Mod(\row + \offset, \l) * \h;
                \c = (0, -\row * \h);
                \d = (\w, -\y2);
                {
                    \draw[edge] (\c) -- (\d);
                };
            };
        };
    }
\end{tikzpicture}
            \caption{\(b = \b\)}%
        \end{subfigure}
    }
    \caption{%
        The construction of \Cref{lem:graph} for \(k = 8\) vertices and degree \(b \in [4]\): (a)~the \(4P_2\) (\(n = 4\) edges), (b)~the cycle \(C_8\) (\(n = 8\)), (c)~the cube graph \(Q_3\) (\(n = 12\)), and (d)~the complete bipartite graph \(K_{4, 4}\) (\(n = 16\)).
        Every edge represents an agent and every vertex corresponds to a partition.
        A vertex and an edge are incident if the corresponding agent is in the corresponding candidate set.
    }%
    \label{fig:graph}
    \Description{%
        Panel (a) shows two columns of four vertices aligned next to each other.
        Every vertex on the left is connected to its corresponding vertex on the right, totaling eight edges.
        The graph is labeled b = 1.
        Panel (b) is a copy of panel (a) that adds four more edges: the i-th vertex on the left is connected to the (i + 1)-th vertex on the right, modulo four.
        The resulting graph is the cycle on eight vertices; it is labeled b = 2.
        Panels (c) and (d) add four more edges each, connecting the i-th vertex on the left to the (i + 2)-th and to the (i + 3)-th vertex on the right, modulo four, respectively.
        The resulting graphs are the cube graph and the complete bipartite graph on eight vertices.
        They are labeled b = 3 and b = 4.
    }
\end{figure}

We condense the findings of this section in the following lemma.

\begin{lemma}\label{lem:partitions}
    Let \(n, k \in \N\) with \(k < n\) be such that \(b \defas 2n/k \in \N\) and \(b \leq k/2 \in \N\).
    Let further \(V\) with \(|V| = n\) denote a set of agents.
    Then, one may form \(k\) partitions \(S_1^p \disjoint S_2^p = V\), \(p \in [k]\), such that
    \begin{enumerate}[label=(\roman*)] 
        \item \(|S_2^p| = b\) for all \(p \in [k]\),\label{lem:partitions-1}
        \item \(|S_2^p \cap S_2^q| \leq 1\) for all \(p, q \in [k]\) with \(p \neq q\),\label{lem:partitions-2}
        \item \(|\{p \in [k] \mid v \in S_2^p\}| = 2\) for all \(v \in V\), and\label{lem:partitions-3}
        \item for every \(U \subseteq V\), there is a partition \({\dot\bigcup}_{\indx \in [b]} U_\indx = U\) with \(u \in S_2^p \Rightarrow v \not\in S_2^p\) for all \(\indx \in [b]\), \(u, v \in U_\indx\) with \(u \neq v\), and \(p \in [k]\).\label{lem:partitions-4}
    \end{enumerate}
\end{lemma}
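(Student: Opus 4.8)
The plan is to realize the desired partition system as the dual of the explicit graph constructed in \Cref{lem:graph} and then read off the four properties from the structural facts already established in this section. Concretely, since \(b = 2n/k \in \N\) and \(b \leq k/2 \in \N\), \Cref{lem:graph} provides a \(b\)-regular bipartite simple graph \(G = (V_G, E_G)\) of order \(|V_G| = k\) and size \(|E_G| = n\); and since \(|E_G| = n = |V|\), I would identify the agent set \(V\) with \(E_G\) through a fixed bijection, so that agents become the edges of \(G\). Passing to the dual hypergraph \(H \defas G^* = (E_G, X)\) with \(X = \{\{e \in E_G \mid v \in e\} \mid v \in V_G\}\), \Cref{lem:simple_dual} guarantees that \(H\) is \(2\)-regular and linear; and since duality interchanges order with size and degree with rank, \(H\) is moreover \(b\)-uniform and has exactly \(k\) hyperedges, one per vertex of \(G\). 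Enumerating \(V_G = \{p_1, \dots, p_k\}\), I would set, for each \(p \in [k]\), \(S_2^p \defas \{e \in E_G \mid p \in e\}\) and \(S_1^p \defas V \setminus S_2^p\), so that \(S_1^p \disjoint S_2^p = V\).

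Properties~(i)--(iii) are then immediate. The first is \(b\)-uniformity of \(H\), that is, \(|S_2^p| = b\). The second is linearity: two distinct vertices of the simple graph \(G\) lie on at most one common edge, so \(|S_2^p \cap S_2^q| \leq 1\) whenever \(p \neq q\). The third is \(2\)-regularity: every edge of \(G\) has exactly two endpoints, so every agent lies in exactly two of the sets \(S_2^p\). For property~(iv), I would fix an arbitrary \(U \subseteq V\), view it as a set of edges of \(G\), and consider the subgraph \(G[U]\) spanned by these edges. This subgraph is bipartite, being a subgraph of the bipartite graph \(G\), and every vertex has degree at most \(b\) in it, since \(G\) is \(b\)-regular. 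K\H{o}nig's line coloring theorem therefore yields a feasible edge \(b\)-coloring \(\pi \colon U \to [b]\); setting \(U_\indx \defas \pi^{-1}(\indx)\) for \(\indx \in [b]\) gives a partition \({\dot\bigcup}_{\indx \in [b]} U_\indx = U\), where some parts may be empty. If \(u \neq v\) lie in the same part \(U_\indx\), then as edges of \(G\) they share no endpoint, hence no vertex \(p\) of \(G\) is an endpoint of both; equivalently, there is no \(p \in [k]\) with \(\{u, v\} \subseteq S_2^p\), which is exactly the implication required in property~(iv).

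The one step that is not a direct restatement of earlier results is the construction of the edge \(b\)-coloring in property~(iv), and even this reduces at once to K\H{o}nig's theorem once one observes that \(G[U]\) is bipartite with maximum degree at most \(b\); both observations are inherited from \Cref{lem:graph}. Everything else amounts to bookkeeping of the hypergraph--dual-graph correspondence prepared in \Cref{lem:simple_dual}, so I do not expect a genuine obstacle beyond keeping this translation straight.
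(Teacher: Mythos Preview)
Your proposal is correct and follows essentially the same approach as the paper: build the bipartite \(b\)-regular graph of \Cref{lem:graph}, pass to its dual via \Cref{lem:simple_dual}, read off (i)--(iii) from \(b\)-uniformity, linearity, and \(2\)-regularity, and obtain (iv) from K\H{o}nig's line coloring theorem. The only cosmetic difference is that the paper fixes one global edge \(b\)-coloring of \(G\) and restricts it to each \(U\), whereas you color \(G[U]\) directly; both are valid and yield the same conclusion.
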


\begin{proof}
    For \(n\), \(k\), and \(b\) as in the statement, \cref{lem:graph} guarantees the existence of a \(b\)-regular bipartite graph \(G = (X, V)\) of order \(|X| = k\) and size \(|V| = n\).
    Let \(H \defas G^* = (V, E)\) be its dual graph.
    Note that \(H\) is \(b\)-uniform and has order~\(n\) and size~\(k\).
    By \Cref{lem:simple_dual}, \(H\)~is further \(2\)-regular and linear.
    As \(b \geq 2\) by definition, it follows from linearity that \(H\) has no repeated edges, \ie{} \(E\) is a set.

    We use \(H\) to form a system of partitions of \(V\).
    First, enumerate \(E\) by an arbitrary but fixed bijection \(\phi \colon [k] \to E\).
    Then, for every \(p \in [k]\), define a candidate set \(S_2^p \defas \phi(p)\) and the associated voter set \(S_1^p \defas V \setminus \phi(p)\).
    As \(H\) is \(b\)-uniform, we have~\ref{lem:partitions-1} by construction.
    As it is linear,~\ref{lem:partitions-2}~follows.
    Since \(H\) is \(2\)-regular, also~\ref{lem:partitions-3}~holds.

    It remains to show property~\ref{lem:partitions-4}.
    By K\H{o}nig's line coloring theorem~\citep{koenig1916}, there exists a feasible edge \(b\)-coloring \(\pi \colon V \rightarrow [b]\) of \(G\).
    Let \(G'\) be the subgraph of \(G\) induced by an edge set \(U \subseteq V\).
    Clearly, \(\pi\)~restricted to~\(U\) remains a feasible edge \(b\)-coloring.
    The dual \(H' \defas {(G')}^*\) is the subgraph of~\(H\) induced by the vertex set~\(U\).
    In terms of~\(H'\), \(\pi\)~assigns colors to vertices.
    Since \(\pi\)~restricted to~\(U\) is feasible for~\(G'\), it follows from vertex--edge duality that vertices in~\(H'\) are colored differently if they appear in a hyperedge together, \ie{} \(\pi\) is a feasible vertex coloring for~\(H\). 
    Define thus \(U_\indx \defas \{v \in U \mid \pi(v) = \indx\}\) for each color \(\indx \in [b]\).
    Then, the sets \(U_\indx\) are disjoint by definition and \({\dot\bigcup}_{\indx \in [b]} U_\indx = U\) as \(\pi(U) \subseteq \pi(V) \subseteq [b]\).
    Let finally \(\indx \in [b]\) and \(u, v \in U_\indx\) with \(u \neq v\) and assume towards a contradiction that \(u, v \in S^p_2\) for some \(p \in [k]\).
    Then, \(u, v \in \phi(p) \in E\) and \(\pi(u) = \indx = \pi(v)\)
    by construction of \(S^p_2\) and \(U_\indx\), contradicting that \(\pi\) is a feasible vertex coloring for~\(H = (V, E)\).
\end{proof}

Formally, we write \(\mathcal{S}(n, k)\) for an arbitrary but fixed sequence \({\left((S_1^p, S_2^p)\right)}_{p \in [k]}\), with \(S_1^p \disjoint S_2^p = [n]\) for every \(p\in [k]\), that fulfills the conditions of \Cref{lem:partitions}.
We assume further that \(S_2^1 = [b]\).

\section{Impartial Selection}\label{sec:selection}

We are prepared to construct a mechanism that provides the first approximation guarantee for deterministic impartial selection with weighted votes.
Our main result is the following.

\begin{figure}[t]
    \centering
    \includegraphics[width=0.85\textwidth]{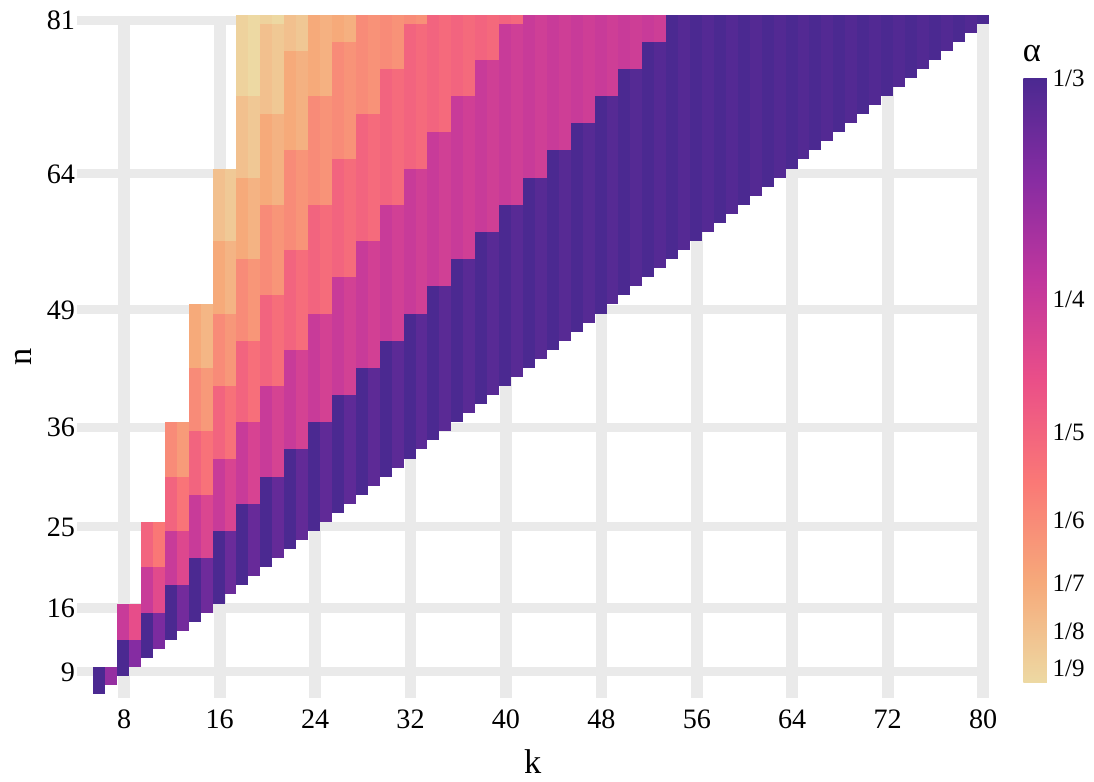}
    \caption{The performance guarantee of \Cref{thm:main} for permissible \(n\) and \(k\).}%
    \label{fig:rainbow}
    \Description{%
        The heatmap-style figure shows the performance guarantee of the main theorem for k from 6 to 80 on the horizontal axis and for n from 7 to 81 on the vertical axis.
        Permissible combinations of n and k are drawn in a color corresponding to the guarantee, roughly ranging from 1/9 to 1/3, while unsupported combinations are left transparent, revealing the white background and grid lines.
    }
\end{figure}

\begin{theorem}\label{thm:main}
    Let $n, k \in \N$ with $1 < k < n$ and $k-k\bmod 2 \geq 2\sqrt{n}$.
    Then, there exists an $(n,k)$-selection mechanism that is impartial and $\alpha$-optimal with
    \[
        \alpha = \frac{k-k\bmod 2}{k\left\lceil \frac{2n}{k-k\bmod 2} \right\rceil}.
    \]
\end{theorem}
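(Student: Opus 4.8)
The plan is to define the mechanism by reduction to the "nice" case treated in \Cref{lem:partitions}. Given $n$ and $k$ with $k - k\bmod 2 \geq 2\sqrt{n}$, set $k' \defas k - k\bmod 2$, so $k'$ is even and $k' \leq k$. We first want to add dummy agents to bring the instance into a form where a balanced, robust partition system over $k'$ candidate sets of equal size~$b \defas \lceil 2n / k' \rceil$ exists. Concretely, pick the smallest $n' \geq n$ such that $b = 2n'/k' \in \N$; since $k'$ is even, such an $n'$ exists with $n' \leq n + k'/2 - 1$, and we need to check that the nice-case constraints $2n'/k' \in \N$ and $2n'/k' \leq k'/2$ are met. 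The bound $b \leq k'/2$ is exactly where the hypothesis $k' \geq 2\sqrt{n}$ enters: $b = \lceil 2n/k'\rceil \leq 2n/k' + 1 \leq k'/2$ needs $4n \leq k'^2 - 2k'$, which follows (after a short computation) from $k' \geq 2\sqrt n$ together with $k' < n$; this verification is one place to be careful with the rounding. We then pad the weight matrix $A \in \calA_n$ to $A' \in \calA_{n'}$ by adding $n' - n$ rows and columns of zeros, so dummy agents cast and receive no votes. By \Cref{lem:partitions} applied to $n'$ and $k'$, we obtain a partition system $\calS(n', k')$.

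Next I would describe the selection rule on each partition and argue impartiality. Fix a tie-breaking order on $[n']$ in which the dummy agents come last. For each partition $p \in [k']$, the mechanism looks only at votes cast by agents in the voter set $S_1^p$ toward agents in the candidate set $S_2^p$, but — to handle agents appearing as candidates twice — it counts each ordered vote $(i,j)$ at most once across the whole run (e.g.\ it attributes vote $(i,j)$ to the partition of smallest index $p$ with $i \in S_1^p$, $j \in S_2^p$). It then selects, from each $S_2^p$, an agent of maximum (deduplicated) score within that partition, breaking ties by the fixed order; the output is the union of these at most $k'$ selected agents, which has size $\leq k' \leq k$, so $f$ is a valid $(n,k)$-selection mechanism. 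Impartiality holds because the score that determines whether agent $j$ is selected from partition $p$ depends only on votes cast by agents other than $j$ (namely those in $S_1^p \not\ni j$), and the deduplication rule assigns each such vote to a partition independently of $j$'s own row; hence changing row $j$ of the matrix changes neither $j$'s score in any partition nor any competitor's score in a partition where $j$ is a candidate, so $f(A) \cap \{j\}$ is unchanged. The dummy agents never beat a real agent with positive score because of the tie-breaking, and they contribute zero weight, so they are harmless.

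The heart of the argument is the approximation ratio, which I would prove via the bound sketched before \Cref{lem:bound-nice-n-k}. Let $T \subseteq [n']$ be (the image in $[n']$ of) a maximum-weight size-$k'$ subset; since dummies have weight $0$, $\score(T; A') \geq \score(\opt_k(A); A)$ — here we use $k' \geq k - 1$ together with the fact that dropping one agent from $\opt_k$ loses at most a $1/k$ fraction only in the odd case, which is exactly the $(k - k\bmod 2)/k$ factor in $\alpha$; I would make this precise by comparing $\score(\opt_{k'})$ to $\score(\opt_k)$. By property~\ref{lem:partitions-4} of \Cref{lem:partitions}, partition $T$ into $b$ internally independent sets and let $U$ be the one of maximum weight, so $\score(U; A') \geq \score(T; A')/b$. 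The agents in $U$ pairwise never share a candidate set, and each appears as a candidate in exactly two partitions (property~\ref{lem:partitions-3}), all of these $2|U|$ partitions being distinct. For each $u \in U$, if $u$ is not selected, then in each of its two candidate partitions some other agent was selected with at least $u$'s deduplicated score there; summing $u$'s two partition-scores over all of $u$'s incident partitions recovers its full score (each vote toward $u$ is counted in exactly one of the two, by deduplication and $2$-regularity). Charging selected weight to the elements of $U$ across these disjoint partitions gives $\score(f(A'); A') \geq \score(U; A')$, and chaining the inequalities yields $\score(f(A); A) = \score(f(A'); A') \geq \score(U; A')/1 \geq \score(T;A')/b \geq \big((k - k\bmod 2)/k\big)\,\score(\opt_k(A);A)/b$, i.e.\ $\alpha = (k-k\bmod 2)/\big(k\lceil 2n/(k - k\bmod 2)\rceil\big)$ as claimed.

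The main obstacle I anticipate is the charging argument in the last paragraph: one must ensure that distinct agents of $U$ are never charged to the same selected agent in the same partition (which follows since the $2|U|$ partitions are pairwise distinct, using~\ref{lem:partitions-2} and~\ref{lem:partitions-3}), and that the deduplication of votes does not lose any weight toward an agent $u$ when summing over its two partitions (it does not, since the two candidate sets containing $u$ are disjoint on the voter side by~\ref{lem:partitions-2}, so each vote $(i,u)$ is eligible in exactly one of them). A secondary technical point is the padding bookkeeping — verifying $b \leq k'/2$ from $k' \geq 2\sqrt n$ with the ceiling, and confirming that $n' < $ (whatever bound is needed) so that $\lceil 2n'/k'\rceil = \lceil 2n/k'\rceil$, so that the $b$ used in the construction equals the $b$ appearing in $\alpha$. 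These are routine but must be stated carefully; I expect the cleanest route is to first prove the "nice" special case $2n/k \in \N$, $2n/k \leq k/2$ as a standalone lemma with ratio $1/b$, and then derive \Cref{thm:main} from it by the padding reduction above.
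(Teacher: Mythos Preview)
Your proposal is correct and follows essentially the same route as the paper, which packages the argument into a nice-case lemma (\Cref{lem:bound-nice-n-k}, proving $1/b$-optimality of $\algs_{k'}$ via exactly the deduplication-and-charging argument you sketch) and a padding lemma (\Cref{lem:bounds-nice-not-nice-n-k}, supplying the $(k-k\bmod 2)/k$ factor by comparing $\opt_{k'}$ to $\opt_k$). Your rounding worry is easier than you fear: from $k' \geq 2\sqrt{n}$ one gets $2n/k' \leq k'/2$, and since $k'$ is even $k'/2$ is an integer, so $\lceil 2n/k'\rceil \leq k'/2$ holds directly---no $+1$ slack is needed, and $b = 2n'/k' = \lceil 2n/k'\rceil$ by your choice of $n'$.
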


The performance guarantee of \Cref{thm:main} is shown in \Cref{fig:rainbow}.
It starts from $2/k$ for $k-k\bmod2 = 2\sqrt{n}$ and grows up to $1/3$ for $k-k\bmod 2 \in [2n/3,n-1]$.

The main idea of the algorithm is as follows.
We construct a robust partition system of the set of agents, \ie{} a set of $k$ many partitions of the agents into voters and candidates such that each agent appears as a candidate twice and with disjoint sets of contenders.
For the second candidacy, we remove votes that are already present in the first candidacy to avoid double-counting.
Then, the mechanism selects the top scoring candidate from each partition, possibly selecting some agents twice.
This mechanism is impartial as voters and candidates are disjoint in each partition.
The performance guarantee stems mainly from the fact that every vote is counted exactly once.

In \Cref{sec:partitions}, we showed that a robust partition system is guaranteed to exist as long as $n$~and~$k$ satisfy $k<n$, $b\defas 2n/k\in \N$, and $b\leq k/2 \in \N$.
In the following, we assume these conditions in order to define and analyze our mechanism; we lift them in the end to obtain the general result stated in \Cref{thm:main}.

Given $n$ and $k$ as in \Cref{lem:partitions}, our selection mechanism is formally described by \Cref{alg:partitions-n-k}; we refer to it as $\algs_k$ and denote its output by $\algs_{k}(A)$ for a given input matrix $A\in \calA_n$.
\begin{algorithm}[t]
    \SetAlgoNoLine{}
    \KwIn{weight matrix $A\in \calA_n$}
    \KwOut{set $X\subseteq [n]$ with $|X|\leq k$}
    let $((S^1_1,S^1_2),\ldots,(S^k_1,S^k_2)) =\calS(n,k)$\;
    \For{$j\in [n]$}{
        let $\{l(j), r(j)\} = \{p \in [k]: j\in S^{p}_2\}$ with $l(j)<r(j)$\;
        define $\hat{\score}_{S^{l(j)}_1}(j) \xleftarrow{} \score_{S^{l(j)}_1}(j)$ and $\hat{\score}_{S^{r(j)}_1}(j) \xleftarrow{} \score_{S^{r(j)}_1 \setminus S^{l(j)}_1}(j)$\;
    }
    initialize $X\xleftarrow{} \emptyset$\;
    \For{$p\in [k]$}{
        take $i^p = \argmax_{j\in S^p_2} \, (\hat{\score}_{S^p_1}(j), j)$ and update $X\xleftarrow{} X\cup \{i^p\}$
        }
    {\bf return} $X$
    \caption{$\algs_{k}(A)$}%
    \label{alg:partitions-n-k}
\end{algorithm}
The procedure considers a partition system with the properties stated in \Cref{lem:partitions} and performs two main steps.
Recall that each agent $j \in [n]$ appears in two candidate sets; we denote their indices by $l(j) < r(j) \in [k]$ such that $j \in S^{l(j)}_2 \cap S^{r(j)}_2$.
The mechanism first computes the \emph{modified score} $\hat{\score}_{S^{p}_1}(j)$ for each $j\in [n]$ and each $p\in \{l(j),r(j)\}$, which is simply the actual score $\score_{S^{l(j)}_1}(j)$ for $p=l(j)$.
For $p=r(j)$, however, we omit the votes from agents $\smash{i\in S^{l(j)}_1}$ in order to avoid double counting.
The mechanism then selects the agent $i^p$ with the highest modified score out of each candidate set $S_2^p$, breaking ties in favor of the largest index.\footnote{We sometimes compare tuples, for example $(\score(j),j)$, in lexicographical order.
We use standard inequality signs as well as the $\min$ and $\max$ operators for this purpose.}
\Cref{fig:example-mechanism} illustrates a possible execution of $\algs_6$ on an instance $A\in \calA_{9}$.

\begin{figure}[t]
    \centering
    \begin{subfigure}{.4\textwidth}
        \centering
        \begin{align*}
            A=\begin{bmatrix}
                \cola{} & \colc{} & \cola{} & \cola{} & \cola{} & \cold{} & \cola{} & \cola{} & \colb{}\\ 
                \cola{} & \cola{} & \colb{} & \cola{} & \colb{} & \cola{} & \cola{} & \cola{} & \cola{}\\ 
                \cola{} & \cola{} & \cola{} & \cola{} & \cola{} & \cola{} & \cola{} & \cold{} & \cola{}\\ 
                \cola{} & \cola{} & \colc{} & \cola{} & \cola{} & \cola{} & \cola{} & \cola{} & \cola{}\\ 
                \colc{} & \cola{} & \cola{} & \cola{} & \cola{} & \cola{} & \cola{} & \cola{} & \cola{}\\ 
                \cola{} & \cola{} & \cola{} & \cola{} & \cold{} & \cola{} & \colb{} & \cola{} & \cola{}\\ 
                \cola{} & \cold{} & \colc{} & \cola{} & \cola{} & \cola{} & \cola{} & \colc{} & \cola{}\\ 
                \cola{} & \cola{} & \cola{} & \cola{} & \cola{} & \cola{} & \cola{} & \cola{} & \cola{}\\ 
                \cola{} & \cola{} & \cola{} & \colc{} & \cola{} & \colb{} & \cola{} & \colc{} & \cola{}   
            \end{bmatrix}
        \end{align*}
    \end{subfigure}
    \begin{subfigure}{.4\textwidth}
        \centering
        \includegraphics[width=0.7\textwidth,page=2]{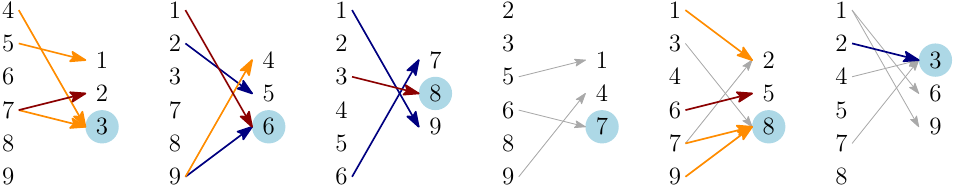}
    \end{subfigure}
    \begin{subfigure}{.9\textwidth}
        \vspace{4mm}
        \includegraphics[width=\textwidth,page=1]{example.pdf}
    \end{subfigure}
    \caption{Example of \( \algs_6(A) \) for $A\in \calA_{9}$.
    The weight matrix~$A$ is shown alongside its graph representation, where votes of weight~1 are in blue, weight~2 are in orange, weight~3 are in red, and votes of weight~0 are not included.
    The partition system is given below, where omitted votes are shown in gray.
    For each partition, the selected agent is highlighted in light blue.
    Observe that \(\score(\algs_6(A)) = 17\) and \( \score(\opt_6(A)) = 27 \); the multiplicative guarantee provided by \Cref{lem:bound-nice-n-k} for this instance is $1/3$.}%
    \label{fig:example-mechanism}
    \Description{%
        The top left panels shows the nine times nine poll matrix A, where the entry in the i-th row and j-th column denotes the weight of the vote cast from agent i to agent j.
        The agents vote as follows.
        Agent 1 votes for agents 2, 6, and 9 with weights 2, 3, and 1.
        Agent 2 for agents 3 and 5, each with weight 1.
        Agent 3 for agent 8 with weight 3.
        Agent 4 for agent 3 with weight 2.
        Agent 5 for agent 1 with weight 2.
        Agent 6 for agents 5 and 7 with weights 3 and 1.
        Agent 7 for agents 2, 3, and 8 with weights 3, 2, and 2.
        Agent 8 casts no votes.
        Agent 9 votes for agents 4, 6, and 8 with weights 2, 1, and 2.
        All other entries of the matrix are zero.
        The top right panel shows this matrix interpreted as the adjacency matrix of a digraph, with arcs colored according to their weight and with zero-weight arcs omitted.
        The bottom panel shows six partitions of the nine agents.
        The left side of each partition contains six and the right side contains three agents.
        The agents that appear on the right sides are, in order, {1, 2, 3}, {4, 5, 6}, {7, 8, 9}, {1, 4, 7}, {2, 5, 8}, and {3, 6, 9}.
        Only arcs going from the left side to the right side are included in each partition.
        Arcs that appear for a second time are grayed out.
        On each right side, one agent is highlighted as the winner.
        These are, in order, agents 3, 6, 8, 7, 8, and 3.
    }
\end{figure}

In this section, whenever $n,~k$, and $A\in \calA_n$ are fixed, we write $((S^1_1,S^1_2), \ldots, (S^k_1,S^k_2))$, $l(j)$, $r(j)$, $\hat{\score}_{S^p_1}(j)$, $i^p$, and $X$ for each $p\in [k]$ and $j\in [n]$ to refer to the objects defined in $\algs_k$.
We only specify the input matrix $A$ as an argument when it is not clear from the context.
The following lemma constitutes the main technical ingredient for the proof of \Cref{thm:main}.

\begin{lemma}\label{lem:bound-nice-n-k}
    Let $n,k\in \N$ with $k<n$ be such that  $b\defas 2n/k \in \N$ and $b\leq k/2 \in \N$.
    Then, $\algs_{k}$ is an impartial and $1/b$-optimal $(n,k)$-selection mechanism.
\end{lemma}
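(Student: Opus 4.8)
The plan is to establish the two assertions separately. For impartiality, I would fix an agent $i \in [n]$ and show that the decision $i \in \algs_k(A)$ depends only on $A_{-i}$. Observe that $i$ appears as a candidate in exactly the two partitions indexed $l(i)$ and $r(i)$; in every other partition $p$, agent $i$ lies in the voter set $S_1^p$ and thus cannot be the argmax $i^p$ since $i^p \in S_2^p$. For the two partitions where $i$ is a candidate, the modified scores $\hat{\score}_{S_1^{l(i)}}(j)$ and $\hat{\score}_{S_1^{r(i)}}(j)$ that determine $i^{l(i)}$ and $i^{r(i)}$ only aggregate votes \emph{cast by} agents in the respective voter sets, none of which is $i$ (as $i \in S_2^{l(i)} \cap S_2^{r(i)}$). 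Hence all modified scores relevant to whether $i$ is selected are functions of $A_{-i}$ alone, which gives impartiality. I should also note $|\algs_k(A)| \le k$ trivially since $X$ is built from at most $k$ elements $i^1, \dots, i^k$.

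For $1/b$-optimality, the key is a charging argument. Let $W \defas \opt_k(A)$ be a top-voted size-$k$ set. Apply property~\ref{lem:partitions-4} of \Cref{lem:partitions} to $U \defas W$ to obtain a partition $W = {\dot\bigcup}_{\indx \in [b]} W_\indx$ into $b$ internally non-competing subsets: no two agents of the same $W_\indx$ share a candidate set $S_2^p$. By averaging, some index $\indx^\star$ satisfies $\score(W_{\indx^\star}) \ge \score(W)/b = \score(\opt_k)/b$; write $U^\star \defas W_{\indx^\star}$. It now suffices to prove $\score(\algs_k(A)) \ge \score(U^\star)$. To this end, I would assign to each agent $j \in U^\star$ one of its two partitions, say $p(j) \in \{l(j), r(j)\}$, in a way that the map $j \mapsto p(j)$ is injective on $U^\star$ — this is possible precisely because the agents in $U^\star$ pairwise do not share a candidate set, so the "candidate-set membership" bipartite-like incidence lets us pick distinct partitions (formally, each $j$ has two choices and any two distinct $j, j'$ in $U^\star$ have disjoint pairs $\{l(j),r(j)\} \cap \{l(j'),r(j')\} = \emptyset$, so the choice is trivial). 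For each such $j$, the selected agent $i^{p(j)}$ in partition $p(j)$ satisfies $\hat{\score}_{S_1^{p(j)}}(i^{p(j)}) \ge \hat{\score}_{S_1^{p(j)}}(j)$ by the argmax rule. Summing over $j \in U^\star$, and using that the partitions $p(j)$ are distinct so the contributions $\hat{\score}_{S_1^{p(j)}}(i^{p(j)})$ count votes to selected agents without any vote being counted twice (this is the crux: the modified scores are designed so that, across all $k$ partitions, each entry $A_{ij}$ is counted at most once toward agent $j$), I get $\score(\algs_k(A)) \ge \sum_{j \in U^\star} \hat{\score}_{S_1^{p(j)}}(i^{p(j)}) \ge \sum_{j \in U^\star} \hat{\score}_{S_1^{p(j)}}(j)$.

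The final step is to show $\sum_{j \in U^\star} \hat{\score}_{S_1^{p(j)}}(j) = \score(U^\star)$, i.e., that choosing $p(j)$ appropriately recovers the \emph{full} score of each $j \in U^\star$. Here I would pick $p(j) \defas r(j)$ is \emph{not} automatically right; rather, I note $\hat{\score}_{S_1^{l(j)}}(j) + \hat{\score}_{S_1^{r(j)}}(j) = \score_{S_1^{l(j)}}(j) + \score_{S_1^{r(j)} \setminus S_1^{l(j)}}(j) = \score_{S_1^{l(j)} \cup S_1^{r(j)}}(j)$, and since $|\{p : j \in S_2^p\}| = 2$ means $j$ is a candidate only in partitions $l(j), r(j)$, every other agent votes for $j$ only from within $S_1^{l(j)} \cup S_1^{r(j)}$ — in fact $S_1^{l(j)} \cup S_1^{r(j)} = [n] \setminus (S_2^{l(j)} \cap S_2^{r(j)})$, and the only agent possibly excluded, the unique element of $S_2^{l(j)} \cap S_2^{r(j)}$ if it exists, has zero diagonal-type contribution issues — one checks $\score_{S_1^{l(j)} \cup S_1^{r(j)}}(j) = \score(j)$ because $j \notin S_1^{l(j)} \cup S_1^{r(j)}$ and any third agent $j'$ in $S_2^{l(j)} \cap S_2^{r(j)}$ would force $|S_2^{l(j)} \cap S_2^{r(j)}| \ge 1$ which is allowed, so I must be slightly more careful and instead bound $\hat{\score}_{S_1^{l(j)}}(j) + \hat{\score}_{S_1^{r(j)}}(j) \ge \score(j) - \score_{\{j'\}}(j)$ and argue this loss is absorbed, or observe it equals $\score(j)$ whenever $|S_2^{l(j)} \cap S_2^{r(j)}| = 0$. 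Thus $\max\{\hat{\score}_{S_1^{l(j)}}(j), \hat{\score}_{S_1^{r(j)}}(j)\} \ge \tfrac12 \score(j)$ always, which already yields $\alpha = 1/(2b)$; to get the sharp $1/b$ I instead sum over \emph{both} partitions of each $j \in U^\star$ — legitimate since for $U^\star$ all $2|U^\star|$ partition indices are distinct — obtaining $\score(\algs_k(A)) \ge \sum_{j \in U^\star}(\hat{\score}_{S_1^{l(j)}}(j) + \hat{\score}_{S_1^{r(j)}}(j)) = \sum_{j \in U^\star} \score_{S_1^{l(j)} \cup S_1^{r(j)}}(j) = \score(U^\star)$, where the last equality uses that $j$'s voters all lie outside its two candidate sets' common part. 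Combining, $\score(\algs_k(A)) \ge \score(U^\star) \ge \score(\opt_k)/b$. \textbf{The main obstacle} I anticipate is precisely this last bookkeeping: verifying that summing modified scores over both candidacies of each agent in $U^\star$ neither under- nor over-counts, i.e., reconstructs exactly $\score(j)$, and that because the partition indices used are all distinct the left-hand side is genuinely a lower bound on $\score(\algs_k(A))$ rather than an overcount of some $i^p$ used for two different $j$'s — this is where properties~\ref{lem:partitions-2} and~\ref{lem:partitions-3} and the disjointness within $U^\star$ from~\ref{lem:partitions-4} all have to be invoked together.
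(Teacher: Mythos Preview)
Your plan is essentially the paper's proof: same impartiality argument, same use of property~\ref{lem:partitions-4} to partition $\opt_k$ into $b$ non-competing classes, same averaging to pick a best class $U^\star$, and the same two-partition charging $\score(\algs_k(A)) \ge \sum_{p\in[k]}\hat\score_{S^p_1}(i^p) \ge \sum_{j\in U^\star}\bigl(\hat\score_{S^{l(j)}_1}(i^{l(j)})+\hat\score_{S^{r(j)}_1}(i^{r(j)})\bigr) \ge \score(U^\star)$.

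The one place where you are needlessly hesitant is the identity $\hat\score_{S^{l(j)}_1}(j)+\hat\score_{S^{r(j)}_1}(j)=\score(j)$. Property~\ref{lem:partitions-2} says $|S_2^{l(j)}\cap S_2^{r(j)}|\le 1$, and since $j$ itself lies in this intersection, it equals $\{j\}$ exactly; there is no ``third agent $j'$'' to worry about. Hence $S_1^{l(j)}\cup S_1^{r(j)}=[n]\setminus\{j\}$ and the two modified scores sum to $\score(j)$ on the nose (zero diagonal). This is the paper's equation~\eqref{eq:sum-modified-indegrees} and removes the detour through the $1/(2b)$ fallback. Your ``main obstacle'' is the right one: the paper handles the bound $\score(\algs_k(A))\ge\sum_{p\in[k]}\hat\score_{S^p_1}(i^p)$ by noting that each selected agent $j$ contributes $\score(j)=\hat\score_{S^{l(j)}_1}(j)+\hat\score_{S^{r(j)}_1}(j)$ on the left, which dominates the at most two terms $\hat\score_{S^p_1}(i^p)$ with $i^p=j$ on the right---exactly the no-double-counting you describe.
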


\begin{proof}
    We consider $n$ and $k$ as in the statement.
    We first note that $\algs_{k}$ returns a subset of~$[n]$ of size at most~$k$ and is well-defined as we have $|\{p \in [k]: j\in S^{p}_2\}| = 2$ for every $j\in [n]$.
    The former holds since $i^p$ is a single agent for every $p\in [k]$ and $X=\bigcup_{p\in [k]}\{i^p\}$; the latter follows from property~\ref{lem:partitions-3} of \Cref{lem:partitions} since
    $b\defas 2n/k \in \N$ and $b\leq k/2 \in \N$.

    To see that $\algs_{k}$ is impartial, let $A, A'\in \calA_n$ and $j\in [n]$ such that $A_{-j} = A'_{-j}$.
    Suppose $j\in \algs_{k}(A)$.
    From the definition of the mechanism, we have that there is a $p\in [k]$ such that $j = \argmax_{i\in S^p_2} (\hat{\score}_{S^p_1}(i;A), i)$.
    Since $j\in S^p_2$ and $A_{-j}=A'_{-j}$, we have both that $\hat{\score}_{S^p_1}(j;A) = \hat{\score}_{S^p_1}(j;A')$ and, for every $i\in S^p_2\setminus \{j\}$, that $\hat{\score}_{S^p_1}(i;A) = \hat{\score}_{S^p_1}(i;A')$.
    This yields $j = \argmax_{i\in S^p_2} (\hat{\score}_{S^p_1}(i;A'), i)$.
    Thus, we obtain from the definition of the mechanism that $j\in \algs_{k}(A')$.
    Exchanging the roles of $A$~and~$A'$ in the previous argument, we obtain that if $j\in \algs_{k}(A')$, then also $j\in \algs_{k}(A)$.
    We thus conclude that $\algs_{k}(A)\cap \{j\} = \algs_{k}(A')\cap \{j\}$.

    It remains to show that $\algs_k$ has an approximation ratio of $1/b$.
    To this end, we let $A\in \calA_n$ be an arbitrary weight matrix.
    First, observe that
    \begin{equation}
        \hat{\score}_{S^{l(j)}_1}(j) + \hat{\score}_{S^{r(j)}_1}(j) = \score_{S^{l(j)}_1}(j) + \score_{S^{r(j)}_1 \setminus S^{l(j)}_1}(j) = \score(j) \label{eq:sum-modified-indegrees}
    \end{equation}
    for every $j\in [n]$, since property~\ref{lem:partitions-2} of \Cref{lem:partitions} implies $S^{l(j)}_1\cup S^{r(j)}_1 = [n]\setminus \{j\}$.
    Furthermore, the definition of $i^p$ yields that
    \begin{equation}\label{eq:selected-vertex}
        \hat{\score}_{S^p_1}(i^p) \geq \hat{\score}_{S^p_1}(j)
    \end{equation}
    for every $p\in [k]$ and $j\in S^p_2$.
    Given these two facts, we claim that
    \begin{equation}
       \hat{\score}_{S^{l(j)}_1}(i^{l(j)}) + \hat{\score}_{S^{r(j)}_1}(i^{r(j)}) \geq \score(j) \label{eq:indegree-two-sets}
    \end{equation}
    for every $j\in [n]$.
    To see this, we fix $j\in [n]$.
    If $i^p=j$ for each $p\in \{l(j),r(j)\}$, inequality~\eqref{eq:indegree-two-sets} follows immediately from equality~\eqref{eq:sum-modified-indegrees}.
    If $|\{j\}\cap \{i^p: p\in \{l(j),r(j)\}\}| = 1$, say w.l.o.g.\ $i^{l(j)}=j$ and $i^{r(j)}=h \not= j$, we have that
    \[
        \hat{\score}_{S^{r(j)}_1}(h) \geq \hat{\score}_{S^{r(j)}_1}(j) = \score(j) - \hat{\score}_{S^{l(j)}_1}(j),
    \]
    where the inequality follows from~\eqref{eq:selected-vertex} and the equality from~\eqref{eq:sum-modified-indegrees}.
    In this case, inequality~\eqref{eq:indegree-two-sets} follows from $j=i^{l(j)}$ and $h=i^{r(j)}$. Finally, if $j\not\in \{i^p: p\in \{l(j),r(j)\}\}$, we have from~\eqref{eq:selected-vertex} that
    \[
        \hat{\score}_{S^{l(j)}_1}(i^{l(j)}) \geq \hat{\score}_{S^{l(j)}_1}(j)\quad
        \text{and}
        \quad
        \hat{\score}_{S^{r(j)}_1}(i^{r(j)}) \geq \hat{\score}_{S^{r(j)}_1}(j)
    \]
    so that inequality~\eqref{eq:indegree-two-sets} follows from summing up these two inequalities and applying equality~\eqref{eq:sum-modified-indegrees}.
    This concludes the proof of inequality~\eqref{eq:indegree-two-sets}.

    Letting $\chi$ denote the indicator function for logical propositions, we note that
    \begin{align}
        \score(\algs_{k}(A))
        & = \sum_{j\in \algs_k(A)} \score(j) \nonumber \\
        & = \sum_{j\in \algs_k(A)} \left( \hat{\score}_{S^{l(j)}_1}(j) + \hat{\score}_{S^{r(j)}_1}(j) \right) \nonumber \\
        & \geq \sum_{j\in \algs_k(A)} \left( \hat{\score}_{S^{l(j)}_1}(j) \chi(j=i^{l(j)}) + \hat{\score}_{S^{r(j)}_1}(j) \chi(j=i^{r(j)}) \right) \nonumber \\
        & = \sum_{p\in [k]} \hat{\score}_{S^p_1}(i^p). \label{eq:indegree-alg}
    \end{align}
    Indeed, the first equality follows from the definition of $\algs_k(A)$, the second one from equality~\eqref{eq:sum-modified-indegrees}, the inequality simply from $\chi(\cdot) \leq 1$, and the last equality follows from the definition of $i^p$ for each $p\in [k]$.
    We next use inequalities\ \eqref{eq:indegree-two-sets}~and~\eqref{eq:indegree-alg} to conclude the bound stated in the lemma.

    For $b\defas 2n/k$, we know from property~\ref{lem:partitions-4} of \Cref{lem:partitions} that there is a partition \({\dot\bigcup}_{\indx \in [b]} U_{\indx} = \opt_k(A)\) such that \(i \in S_2^p\) implies \(j \not\in S_2^p\) for all \(\indx \in [b]\), \(i, j \in U_{\indx}\) with $i\not=j$, and \(p \in [k]\). We obtain that, for every \(\indx \in [b]\),
    \begin{equation}
        \score(\algs_{k}(A)) \geq \sum_{p\in [k]} \hat{\score}_{S^p_1}(i^p) \geq \sum_{j\in U_{\indx}} \left( \hat{\score}_{S^{l(j)}_1}(i^{l(j)}) + \hat{\score}_{S^{r(j)}_1}(i^{r(j)}) \right) \geq \score(U_{\indx}),\label{eq:indegree-clique}
    \end{equation}
    where the first inequality follows from inequality~\eqref{eq:indegree-alg}, the second one from the fact that $\{l(i),r(i)\}\cap \{l(j),r(j)\}=\emptyset$ for every $\indx\in [b]$ and every $i,j\in U_{\indx}$ with $i\not= j$, and the last one from inequality~\eqref{eq:indegree-two-sets}.
    This yields
    \[
        \score(\algs_{k}(A)) \geq \max_{\indx\in [b]} \score(U_\indx) \geq \frac{1}{b} \sum_{\indx\in [b]} \score(U_{\indx}) = \frac{1}{b}\score(\opt_k(A)).
    \]
    Here, the first inequality follows from~\eqref{eq:indegree-clique}, the second one from the observation that the maximum of a set of values is at least as large as their average, and the equality from the fact that \( {\{U_\indx\}}_{\indx\in [b]} \) is a partition of $\opt_k(A)$.
    Therefore, we obtain that \(\algs_{k}\) is $\alpha$-optimal for
    \[
         \frac{\score(\algs_{k}(A))}{\score(\opt_k(A))} \geq \frac{1}{b} = \alpha. \qedhere
    \]
\end{proof}

In order to conclude our main result, it only remains to extend the bound given by \Cref{lem:bound-nice-n-k} to the case
where at least one of the conditions $b\defas 2n/k \in \N$ or $b\leq k/2 \in \N$ is not satisfied.
To this end, we show a general way to extend bounds on the approximation ratio for given values of $\tilde{n}$~and~$\tilde{k}$ to other values $n$~and~$k$: whenever $n\leq \tilde{n}$ and $k\geq \tilde{k}$, we can do so preserving impartiality and only losing a factor of~$\tilde{k}/k$.

Given $k, \tilde{k}, \tilde{n}, n\in \N$ with $\tilde{k} \leq k< n\leq \tilde{n}$, and an $(\tilde{n},\tilde{k})$-selection mechanism $\alg$, we can generalize $\alg$ to the $(n,k)$-selection mechanism $\gen_{\alg,k}$.
This is formally described by \Cref{alg:extension-alg-k}, whose output is denoted by $\gen_{\alg,k}(A)$ for an input matrix $A\in \calA_n$.
\begin{algorithm}[t]
    \SetAlgoNoLine{}
    \KwIn{weight matrix $A\in \calA_n$}
    \KwOut{set $X\subseteq [n]$ with $|X|\leq k$}
    define $\tilde{A}\in \calA_{\tilde{n}}$ as
    \[
        \tilde{A}_{ij} = \begin{cases}
            A_{ij} & \text{ if } i,j\in [n],\\
            0 & \text{ otherwise.}
        \end{cases}
    \]
    {\bf return} $\alg(\tilde{A})$
    \caption{$\gen_{\alg,k}(A)$}%
    \label{alg:extension-alg-k}
\end{algorithm}
This algorithm simply extends $A$ to the $\tilde{n}\times\tilde{n}$ matrix~$\tilde{A}$ by adding $\tilde{n}-n$ many all-zero rows and columns to it, and then applies $\alg$ on $\tilde{A}$.
As before, whenever $\tilde{n},~n,~k,~\alg$, and $A\in \calA_n$ are fixed, we use $\tilde{A}$ to refer to the object defined in \Cref{alg:extension-alg-k} for this input.
In a slight overload of notation, when we consider $A'\in \calA_n$ as an input, we write simply $\tilde{A}'$ for the matrix defined in \Cref{alg:extension-alg-k} on input $A'$.
We obtain the following lemma.

\begin{lemma}\label{lem:bounds-nice-not-nice-n-k}
    Let $\tilde{k},k,n,\tilde{n} \in \N$ with $\tilde{k}\leq k< n\leq \tilde {n}$ be such that there exists an impartial and $\tilde{\alpha}$-optimal $(\tilde{n},\tilde{k})$-selection mechanism $\alg$.
    Then $\gen_{\alg,k}$ is an impartial and $\alpha$-optimal $(n,k)$-selection mechanism with
    $ \alpha = (\tilde{k}/k) \tilde{\alpha}$.
\end{lemma}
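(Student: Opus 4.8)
The plan is to verify the three required properties of $\gen_{\alg,k}$ in turn: that it is a valid $(n,k)$-selection mechanism, that it is impartial, and that it is $\alpha$-optimal with $\alpha = (\tilde k / k)\tilde\alpha$. The first point is essentially immediate: $\gen_{\alg,k}(A) = \alg(\tilde A)$, and since $\alg$ is an $(\tilde n, \tilde k)$-selection mechanism we have $|\alg(\tilde A)| \le \tilde k \le k$, and $\alg(\tilde A) \subseteq [\tilde n]$; but because the rows and columns of $\tilde A$ indexed outside $[n]$ are all zero, I should argue that w.l.o.g.\ (by the tie-breaking or simply because these agents have score zero) we may assume $\alg(\tilde A) \subseteq [n]$ — actually this is not needed for membership, since even if $\alg$ returns indices outside $[n]$ the set still has size at most $k$; however for the approximation guarantee it is cleaner to note such agents contribute nothing. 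I will just observe $\gen_{\alg,k}(A)\subseteq[\tilde n]$ with size at most $k$ and, since we want the output in $2^{[n]}$, note that any selected agent outside $[n]$ may be dropped without affecting the score (as all its incident weights in $\tilde A$ are zero), so we may assume $\gen_{\alg,k}(A)\subseteq[n]$.

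For impartiality, the key observation is that the map $A \mapsto \tilde A$ commutes with row deletion in the relevant sense: if $A, A' \in \calA_n$ satisfy $A_{-i} = A'_{-i}$ for some $i \in [n]$, then $\tilde A_{-i} = \tilde A'_{-i}$, because the two extended matrices differ only in row $i$ (the added rows are all zero in both, and rows in $[n]\setminus\{i\}$ agree by hypothesis). Then impartiality of $\alg$ gives $\alg(\tilde A) \cap \{i\} = \alg(\tilde A') \cap \{i\}$, which is exactly $\gen_{\alg,k}(A) \cap \{i\} = \gen_{\alg,k}(A') \cap \{i\}$. This step is routine; the only mild care needed is that $i \in [n] \subseteq [\tilde n]$ so that the impartiality hypothesis on $\alg$ applies to this index.

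For $\alpha$-optimality, the chain of inequalities I expect is
\[
    \score(\gen_{\alg,k}(A); A) = \score(\alg(\tilde A); \tilde A) \ge \tilde\alpha \cdot \score(\opt_{\tilde k}(\tilde A); \tilde A) \ge \tilde\alpha \cdot \score(\opt_{\tilde k}(\tilde A) \cap [n]; \tilde A) \ge \frac{\tilde k}{k}\, \tilde\alpha \cdot \score(\opt_k(A); A).
\]
The first equality holds because for $S \subseteq [n]$ the scores computed under $A$ and $\tilde A$ coincide (the extra entries are zero), together with the earlier remark that we may take $\gen_{\alg,k}(A) \subseteq [n]$. The second inequality is just $\tilde\alpha$-optimality of $\alg$ applied to $\tilde A$. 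The third step drops the agents outside $[n]$, which is harmless since those rows/columns of $\tilde A$ are zero. The main content — and the step I expect to require the most thought — is the last inequality: I must relate the best size-$\tilde k$ subset of $[n]$ (under $A$) to the best size-$k$ subset of $[n]$. Here the argument is that $\opt_k(A)$ has size $k$, so one can pick $\tilde k$ of its agents whose total score is at least a $\tilde k / k$ fraction of $\score(\opt_k(A); A)$ — indeed, averaging over all $\binom{k}{\tilde k}$ subsets of size $\tilde k$ of $\opt_k(A)$, or more simply keeping the $\tilde k$ highest-scoring among them, yields such a subset; call it $T$. Since $T \subseteq [n]$ and $|T| = \tilde k$, it competes in the maximization defining $\opt_{\tilde k}(\tilde A)$ restricted to $[n]$, giving $\score(\opt_{\tilde k}(\tilde A)\cap[n]; \tilde A) \ge \score(T;\tilde A) = \score(T;A) \ge (\tilde k/k)\score(\opt_k(A);A)$. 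One subtlety to state carefully: individual agent scores $\score(\{j\}; A)$ are nonnegative (since $A \in \calA_n$ has nonnegative entries), which is what makes both "keep the top $\tilde k$" and "drop agents outside $[n]$" valid. Combining the displayed chain gives exactly $\alpha = (\tilde k / k)\tilde\alpha$, completing the proof.
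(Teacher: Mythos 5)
Your proof is correct and follows essentially the same route as the paper's: extend $A$ by zero rows/columns, transfer impartiality via $\tilde A_{-i}=\tilde A'_{-i}$, and relate $\score(\opt_k(A);A)$ to $\score(\opt_{\tilde k}(\tilde A);\tilde A)$ by the averaging/top-$\tilde k$ argument, which is exactly the paper's inequality $\score(\opt_k(A))/k \le \score(\opt_{\tilde k}(\tilde A))/\tilde k$. Your extra remark about possibly selected dummy agents (scores zero, so they can be ignored) is a harmless technicality the paper glosses over.
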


\begin{proof}
    Let $n$, $k$, $\tilde{n}$, and $\tilde{k}$ be as in the statement.
    Let also $\alg$ denote the impartial and $\tilde{\alpha}$-optimal $(\tilde{n},\tilde{k})$-selection mechanism.
    In order to see that $\gen_{\alg,k}$ is impartial, let $A,A'\in \calA_n$ and $i\in [n]$ such that $A_{-i}=A'_{-i}$.
    This implies $\tilde{A}_{-i}=\tilde{A}'_{-i}$, thus the impartiality of $\alg$ yields
    \[
        \gen_{\alg,k}(A) \cap \{i\} = \alg(\tilde{A}) \cap \{i\} = \alg(\tilde{A}') \cap \{i\} = \gen_{\alg,k}(A') \cap \{i\}.
    \]

    To prove the approximation guarantee, we  let $A\in \calA_n$ be an arbitrary weight matrix and observe that
    \begin{equation}\label{eq:alpha-tilde1}
        \frac{\sigma(\gen_{\alg,k}(A))}{\sigma(\opt_{\tilde{k}}(\tilde{A}))} = \frac{\sigma(\alg(\tilde{A}))}{\sigma(\opt_{\tilde{k}}(\tilde{A}))} \geq \tilde{\alpha},
    \end{equation}
    where the equality follows from the definition of $\gen_{\alg,k}$ and the inequality follows from the $\tilde{\alpha}$-optimality of $\alg$.
    On the other hand, as $\tilde{k}\leq k$ and $\sigma(j,\tilde{A})=0$ for every $j\not \in [n]$, we know that
    \[
        \frac{\sigma(\opt_{k}(A))}{k} = \frac{1}{k} \max_{S \subseteq [n] \colon |S| = k} \score(S;A) \leq \frac{1}{\tilde{k}} \max_{S \subseteq [n] \colon |S| = \tilde{k}} \score(S;A)= \frac{\sigma(\opt_{\tilde{k}}(\tilde{A}))}{\tilde{k}},
    \]
    \ie{} the average score of the $k$ top-voted agents of input $A$ can be no larger than the average score of the $\tilde{k}$ top-voted agents of input $\tilde{A}$.
    Plugging this inequality into~\eqref{eq:alpha-tilde1} concludes the proof as
    \[
        \frac{\sigma(\gen_{\alg,k}(A))}{\sigma(\opt_k(A))} \geq \frac{\tilde{k}}{k} \frac{\sigma(\gen_{\alg,k}(A))}{\sigma(\opt_{\tilde{k}}(\tilde{A}))} \geq \frac{\tilde{k}}{k} \tilde{\alpha}. \qedhere
    \]
\end{proof}

Our main result now follows from the last two lemmas.

\begin{proof}[Proof of \Cref{thm:main}]
    Let $n$ and $k$ be as in the statement.
    We define
    \[
        \tilde{k}\defas k-k\bmod 2 \quad \text{and} \quad \tilde{n} \defas \frac{k-k\bmod 2}{2} \left\lceil \frac{2n}{k-k\bmod 2} \right\rceil.
    \]
    It is clear that $\tilde{n},\tilde{k}$ are natural numbers with $\tilde{k}\leq k < n\leq \tilde{n}$ and that
    \[
        b\defas \frac{2\tilde{n}}{\tilde{k}} = \left\lceil \frac{2n}{k-k\bmod 2}\right\rceil\in \N.
    \]
    Moreover, we have that
    \[
        \tilde{n} = \frac{k-k\bmod 2}{2} \left\lceil \frac{2n}{k-k\bmod 2} \right\rceil \leq \frac{k-k\bmod 2}{2} \left\lceil \frac{2\frac{{(k-k\bmod2)}^2}{4}}{k-k\bmod 2} \right\rceil = \frac{\tilde{k}^2}{4},
    \]
    where the inequality follows from the condition $k-k\bmod 2\geq 2\sqrt{n}$ in the statement.
    This yields $b = 2\tilde{n}/\tilde{k} \leq \tilde{k}/2\in \N$.
    By \Cref{lem:bound-nice-n-k}, this implies that $\algs_{\tilde{k}}$ is an impartial and $\tilde{\alpha}$-optimal $(\tilde{n},\tilde{k})$-selection mechanism with
    \[
        \tilde{\alpha} = \frac{1}{b} = \frac{1}{\left\lceil \frac{2n}{k-k\bmod 2}\right\rceil}.
    \]
    Since $\tilde{n},\tilde{k}\in \N$ are such that $\tilde{k}\leq k$ and $\tilde{n}\geq n$, \Cref{lem:bounds-nice-not-nice-n-k} implies that $\gen_{\algs_{\tilde{k}},k}$ is an impartial and $\alpha$-optimal $(n,k)$-selection mechanism with
    \[
        \alpha = \frac{\tilde{k}}{k} \tilde{\alpha} = \frac{k-k\bmod 2}{k \left\lceil \frac{2n}{k-k\bmod 2}\right\rceil}. \qedhere
    \]
\end{proof}

The mechanism and its approximation ratio naturally extend to the widely studied unweighted setting, where one restricts to matrices $A\in \calA_n$ with $A_{ij}\in \{0,1\}$ for every $i,j\in [n]$.
This improves on the previous best lower bound of $1/k$ whenever the number of agents to select is high enough compared to $n$ for \Cref{thm:main} to be applicable: if $k-k\bmod 2 \geq 2\sqrt{n}$, the theorem guarantees the existence of an $(n,k)$-selection mechanism that is impartial and $\alpha$-optimal with
\[
    \alpha = \frac{k-k\bmod 2}{k\left\lceil \frac{2n}{k-k\bmod 2} \right\rceil} \geq \frac{k-k\bmod 2}{k\left\lceil \frac{2{(k-k\bmod2)}^2}{4(k-k\bmod 2)} \right\rceil} = \frac{2}{k}.
\]

We end this section by showing that the analysis of our $(n,k)$-selection mechanism $\algs_k$ for $n$ and $k$ satisfying the conditions of \Cref{lem:bound-nice-n-k} is tight.

\begin{theorem}\label{thm:tightness}
    Let $n,k\in\N$ with $k<n$ be such that $b\defas 2n/k \in \N$ and $b \leq k/2 \in \N$.
    Then, for every $\varepsilon>0$ we have that $\algs_k$ is not $(1/b+\varepsilon)$-optimal.
\end{theorem}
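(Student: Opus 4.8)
The plan is to show, for every given $\varepsilon>0$, that some weight matrix $A\in\calA_n$ forces $\algs_k$ to collect total weight only $W+b$ while the best $k$ agents jointly carry weight $b(W+1)$, where $W$ is a large integer at our disposal; the ratio $\tfrac{W+b}{bW+b}$ then drops below $\tfrac1b+\varepsilon$ for $W$ large. The observation driving the construction is that $\algs_k$ processes $S^1_2$ as its first partition and that $l(j)=1$ for every $j\in S^1_2$ (since $1\in\{p:j\in S^p_2\}$), so a vote towards $j$ is invisible in partition $r(j)$ precisely when it originates inside $S^1_2$. Accordingly, I would fix an agent $a\notin S^1_2$ --- which exists since $n>b$ --- and let $a$ cast a vote of weight $W$ towards each $j\in S^1_2$, leaving all other large weights zero. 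A short computation using property~\ref{lem:partitions-2} to rewrite $S^{r(j)}_1\setminus S^1_1=S^1_2\setminus\{j\}$ then gives $\hat\score_{S^1_1}(j)=W$ and $\hat\score_{S^{r(j)}_1}(j)=0$ for all $j\in S^1_2$.

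From this, partition $1$ has all candidates at modified score $W$, so $\algs_k$ selects the single ``heavy'' agent $\max S^1_2$ of weight $W$, while every other partition has all its heavy candidates at modified score $0$. By properties~\ref{lem:partitions-2} and~\ref{lem:partitions-3} the partitions $r(j)$, $j\in S^1_2$, are pairwise distinct, distinct from $1$, each holds exactly one heavy candidate (namely $j$, because $S^1_2\cap S^{r(j)}_2=\{j\}$), and no other partition holds a heavy candidate. Next I would guard against the index tie-break selecting the heavy agent $j$ in partition $r(j)$, where every modified score is $0$: for each $j$ pick some $c\in S^{r(j)}_2\setminus\{j\}$ --- automatically $c\notin S^1_2$ --- and add one unit vote towards $c$ placed so as to be counted only in partition $r(j)$, namely from an agent outside $S^{r(j)}_2$ when $r(j)=l(c)$, and from an agent of $S^{l(c)}_2\setminus\{c\}$ when $r(j)=r(c)$. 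Unwinding the definition of $\hat\score$ in each of the two cases shows that this vote lifts $\hat\score_{S^{r(j)}_1}(c)$ to $1$ and changes no other modified score, so $c$ wins partition $r(j)$ and hence exactly one heavy agent is ever selected.

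To finish I would count weights. The non-heavy agents receiving any vote are the at most $b$ agents $c$, and they receive $b$ unit votes in total; since $2b\le k$, the $b$ heavy agents together with all these $c$ form $\opt_k(A)$, with $\score(\opt_k(A))=bW+b$. On the other hand $\algs_k(A)$ consists of one heavy agent and, over the remaining $k-1$ partitions, exactly the agents $c$ (which carry the full non-heavy weight $b$) together with possibly some weight-$0$ agents, so $\score(\algs_k(A))=W+b$. Therefore the ratio equals $\tfrac{W+b}{bW+b}=\tfrac1b\bigl(1+\tfrac{b-1}{W+1}\bigr)$, which is strictly larger than $\tfrac1b$ (in accordance with \Cref{lem:bound-nice-n-k}) but smaller than $\tfrac1b+\varepsilon$ as soon as $W>\tfrac{b-1}{b\varepsilon}$; this proves that $\algs_k$ is not $(\tfrac1b+\varepsilon)$-optimal.

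The part needing the most care will be exactly this vote-placement bookkeeping: from the definition of the modified scores one must certify both that each heavy vote from $a$ is invisible in partition $r(j)$ (this is what forces $a\notin S^1_2$) and that each unit correction vote surfaces in the single partition $r(j)$ and in no other --- which is why its source agent is chosen by the two-case rule above. Everything else is a routine computation, and only properties~\ref{lem:partitions-1}--\ref{lem:partitions-3} of \Cref{lem:partitions} enter; the edge-colouring property~\ref{lem:partitions-4} plays no role in the tightness construction.
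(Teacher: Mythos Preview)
Your construction is correct and proves the theorem, but it follows a more elaborate route than the paper. The paper exploits the convention $S_2^1=[b]$ directly: for each $j\in[b]$ it picks a single voter $h(j)\in S_2^{r(j)}\setminus\{j\}\subseteq S_1^1$ and sets $A_{h(j),j}=1$, nothing else. This vote is visible in partition~$1$ (since $h(j)\in S_1^1$) and invisible in partition~$r(j)$; moreover, because $h(j)>b\ge j$, the index tie-break in partition~$r(j)$ automatically avoids~$j$, so no correction votes are needed. One obtains $\score(\algs_k(A))=1$ and $\score(\opt_k(A))=b$, hitting the ratio $1/b$ exactly rather than asymptotically. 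Your approach instead routes all heavy votes through a single external voter~$a$ and then patches each partition~$r(j)$ with a carefully placed unit vote to force $c(j)$ to beat~$j$; this has the virtue of not relying on the $S_2^1=[b]$ convention or on how the tie-break interacts with it, at the cost of the two-case bookkeeping and the limiting argument in~$W$. One small slip to fix: in your first paragraph the clause ``invisible in partition $r(j)$ precisely when it originates inside $S^1_2$'' has the sets reversed---the vote is invisible precisely when the voter lies in $S^1_1$, i.e.\ \emph{outside} $S^1_2$; your subsequent computation $S^{r(j)}_1\setminus S^1_1=S^1_2\setminus\{j\}$ and the later parenthetical ``this is what forces $a\notin S^1_2$'' are both correct, so this is only an expository typo.
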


\begin{figure}[t]
    \centering
    \begin{subfigure}{\textwidth}
    \centering
        \includegraphics[width=.9\textwidth,page=3]{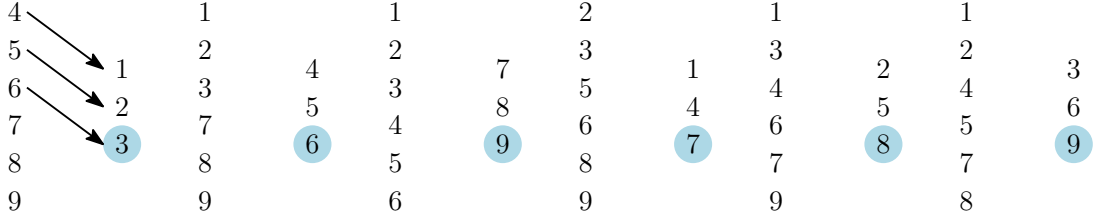}
    \end{subfigure}
    \caption{%
        Example of the construction of the proof of \Cref{thm:tightness} for $n=9$ and $k=6$ with $3$~votes of weight~$1$: agent~$4$ votes for agent~$1$, agent~$5$ votes for agent~$2$, and agent~$6$ votes for agent~$3$.
        All votes are only seen in the first partition.
        Since agents with positive scores have the smallest indices, they are not selected in their second candidate set.
    }%
    \label{fig:tightness}
    \Description{%
        The figure shows the same six partitions of nine agents as the previous figure, but only three weight-one votes as described by the caption are included.
        The winning agents are 3, 6, 9, 7, 8, and 9.
    }
\end{figure}

\begin{proof}
    Let $n$ and $k$ be as in the statement and consider the partition system \[((S^1_1,S^1_2),\ldots,(S^k_1,S^k_2)) =\calS(n,k).\]
    Recall that we defined $\calS(n,k)$ such that $S^1_2=[b]$.
    Considering $l(j)$ and $r(j)$ as defined in \Cref{alg:partitions-n-k} for every $j\in [n]$, we note that for each $j\in S^1_2$ we have $l(j)=1$.
    For each $j\in S^1_2$, we let $h(j)$ be an arbitrary agent in $S^1_1$ such that $h(j)\in S^{r(j)}_2$.
    Such agent is guaranteed to exist, since from property~\ref{lem:partitions-2} of \Cref{lem:partitions} we know that $S^{l(j)}_2 \cap S^{r(j)}_2 = \{j\}$, and from property~\ref{lem:partitions-1} we have that $|S^{r(j)}_2| = b > 1$.

    We consider the instance given by $A\in \calA_n$ with $A_{ij}=1$, if $j\in S^1_2$ and $i=h(j)$, and $A_{ij}=0$, otherwise.
    Intuitively, this construction aims to have $A_{ij}>0$ for some $i\in S^p_1$ and $j\in S^p_2$ only if $p=1$, so that the only agent with a strictly positive score selected by the mechanism, among $b$ agents with a strictly positive score, is $i^1$.
    An example of this construction and the corresponding outcome of the mechanism is illustrated in \Cref{fig:tightness}.

    It is clear that $\opt_k(A)=[b]$ and $\score(\opt_k(A)) = b$.
    On the other hand, we have that $\score(i^1)= 1$ and, for every $p\in \{2,3,\ldots,k\}$, that $\hat{\score}_{S^p_1}(j) = 0$ for every $j\in S^p_2$.
    This is because we have $\score(j)=0$ for every $j\not \in [b]$ and, whenever there is a $j\in [b] \cap S^2_p$, we also have $h(j)\in S^2_p$.
    Moreover, for every $p\in \{2,3,\ldots,k\}$ such that there exists a $j\in [b] \cap S^2_p$, we have that $j\not= \max S^p_2$ since $h(j)\in S^p_2$ and $h(j)>j$.
    This yields $\score(i^p)=0$ for every $p\in \{2,3,\ldots,k\}$, thus $\score(\algs_k(A))=1$.
    This concludes the proof as
    \[
        \frac{\score(\algs_k(A))}{\score(\opt_k(A))} = \frac{1}{b}. \qedhere
    \]
\end{proof}

In terms of general upper bounds on the approximation ratio that an impartial mechanism can achieve, the best known is $(k-1)/k$ \citep{bjelde2017impartial}.
Even for the regime $k-k\bmod 2\geq 2n/3$, in which our mechanism provides a lower bound of $1/3$ and considerably improves the previously best bound of $1/k$ \citep{bjelde2017impartial}, the gap remains large. Further improvements in either lower or upper bounds arise as the main direction for future work.

\section{Impartial Assignment}\label{sec:assignment}

In this section, we consider a generalization of the impartial selection problem in which agents are not selected into one but \emph{assigned} to at most one of $m$ many sets, which we refer to as \emph{jobs}.
Each job $\ell \in [m]$ can be assigned at most $k$ agents, so that we obtain the impartial selection problem as the special case where $m = 1$.
We first extend the notation from \Cref{sec:prelims} to this new setting.

For \(n, m\in \N\) with \(m\leq n\), we consider $m$-tuples of weight matrices \(\bfA = (A_1,A_2,\ldots,A_m)\in \calA^m_n\), each of them representing the weighted votes for one job.
Let further $k < n$ in the following; an instance of the assignment problem is then given by the tuple $\bfA$ and the value $k$.
We let
\begin{align*}
    \calX_k \defas \big\{ &\bfX=(X_1,X_2,\ldots,X_m)\in {\big(2^{[n]}\big)}^m\colon |X_i|\leq k \text{ and } X_i\cap X_j = \emptyset \\
    & \text{for every } i,j\in [m] \text{ with } i\not= j\big\}
\end{align*}
denote the set of feasible assignments, \ie{} the set of tuples $\bfX$ containing $m$ pairwise disjoint subsets of agents, each with cardinality at most $k$.
In a slight overload of notation, for $\bfX\in \calX_k$ and $\bfA\in \calA^m_n$, we write
\[
    \score(\bfX;\bfA) \defas \sum_{\ell\in[m]} \score(X_\ell;A_\ell)
\]
to refer to the sum, over the jobs, of the score of the set assigned to each job according to $\bfX$, and we simply write $\score(\bfX)$ when the instance is clear from the context.
Finally, for $\bfA \in \calA^m_n$, we let
\[
    \opt_k(\bfA) \defas \argmax_{\bfX \in \calX_k} \score(\bfX;\bfA)
\]
denote an arbitrary assignment with the largest score among feasible assignments.
We write just \(\opt_k\) when the instance is clear.

An $(n,m,k)$-assignment mechanism is a function $f \colon \calA^m_n \to {\big(2^{[n]}\big)}^m$ such that $f(\bfA)\in \calX_k$ for every $\bfA\in \calA^m_n$.
Such a mechanism is \emph{impartial} if, for every pair of instances $\bfA \in \calA^m_n$ and $\bfA'\in \calA^m_n$ and for all agents $i\in [n]$ such that ${(A_\ell)}_{-i}={(A'_\ell)}_{-i}$ holds for each job $\ell\in [m]$, it also holds that ${(f(\bfA))}_\ell \cap \{i\} = {(f(\bfA'))}_\ell \cap \{i\}$ for every $\ell\in [m]$.
We further call an $(n,m,k)$-assignment mechanism \emph{$\alpha$-optimal} if
\[
    \frac{\score(f(\bfA);\bfA)}{\score(\opt_k(\bfA);\bfA)} \geq \alpha
\]
holds for all $\bfA \in \calA^m_n$ and some $\alpha \in [0,1]$.

We are prepared to state the main theorem of this section.

\begin{theorem}\label{thm:assignment}
    Let $n,m,k\in \N$ with $1 < k < n,~ mk\leq 2n$, and $k-k\bmod 2 \geq 2\sqrt{n}$.
    Then, there exists an $(n,m,k)$-assignment mechanism that is impartial and $\alpha$-optimal with
    \[
        \alpha = \frac{k-k\bmod 2}{2k\left\lceil \frac{2n}{k-k\bmod 2} \right\rceil}.
    \]
\end{theorem}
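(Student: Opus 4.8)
The plan is to reduce the assignment problem to the selection problem we have already solved, by running $\algs_{\tilde k}$ (or rather the extended mechanism $\gen_{\algs_{\tilde k},k}$) independently on each job and then resolving the conflicts that arise when an agent is chosen for several jobs. Concretely, I would first apply the reduction of \Cref{lem:bounds-nice-not-nice-n-k} exactly as in the proof of \Cref{thm:main}: set $\tilde k \defas k - k\bmod 2$ and $\tilde n \defas (\tilde k/2)\lceil 2n/\tilde k\rceil$, so that $b \defas 2\tilde n/\tilde k \in \N$, $b \leq \tilde k/2$, and by \Cref{lem:bound-nice-n-k} the mechanism $\algs_{\tilde k}$ is impartial and $1/b$-optimal on $(\tilde n,\tilde k)$-instances; composing with $\gen$ gives an impartial $(n,k)$-selection mechanism with ratio $\tilde k/(kb)$. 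For the assignment mechanism, run this selection mechanism on $A_\ell$ for each job $\ell\in[m]$ to obtain tentative sets $Y_1,\dots,Y_m$, each of size at most $k$. These need not be disjoint, so I would then produce a feasible assignment $\bfX \in \calX_k$ by, for each agent $i$ that appears in several $Y_\ell$, keeping $i$ only in the job $\ell$ for which $\score_{[n]}(i;A_\ell)$ is largest (breaking ties by job index), and deleting $i$ from all other jobs. This is where the factor $1/2$ is lost and where the argument needs care.

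The impartiality argument is the easy part and I would do it first: the selection mechanism run on job $\ell$ is impartial, so whether agent $i\in Y_\ell$ depends only on $(A_\ell)_{-i}$; and the conflict-resolution step, for a fixed agent $i$, compares the quantities $\score(i;A_\ell)$ over the jobs $\ell$ with $i\in Y_\ell$ — but $\score(i;A_\ell) = \sum_{j} (A_\ell)_{ji}$ does not depend on row $i$ of any $A_\ell$ — so the final membership of $i$ in each job is a function of $\bfA_{-i}$ only. Hence the composed mechanism is impartial. (I must make sure the tie-breaking in the conflict step also ignores row $i$; using job indices does.)

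The heart of the proof is the approximation bound, and the main obstacle is to argue that discarding an agent from all-but-one job loses at most half the value. Let $\opt_k(\bfA) = (O_1,\dots,O_m)$ be an optimal assignment. Since the $O_\ell$ are pairwise disjoint, each of size at most $k \leq \tilde k$, I can pad each $O_\ell$ to size exactly $\tilde k$ with zero-weight dummy agents and argue, as in \Cref{lem:bounds-nice-not-nice-n-k}, that $\score(Y_\ell;A_\ell) \geq (\tilde k/(kb))\,\score(O_\ell;A_\ell)$ for every $\ell$, so $\sum_\ell \score(Y_\ell;A_\ell) \geq (\tilde k/(kb))\,\score(\opt_k(\bfA))$. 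It then suffices to show that the conflict-resolution step retains at least half of $\sum_\ell \score(Y_\ell;A_\ell)$. For each agent $i$ and each job $\ell$ with $i\in Y_\ell$, the contribution $\score(i;A_\ell)$ is "charged" to the unique job $\ell^*(i)$ that wins $i$; since $\ell^*(i)$ maximizes $\score(i;A_\ell)$ over the competing jobs, the retained contribution of $i$ is $\score(i;A_{\ell^*(i)}) \geq \frac12 \sum_{\ell : i\in Y_\ell}\score(i;A_\ell) \cdot \frac{2}{|\{\ell : i \in Y_\ell\}|}$ — here I need to be a bit more careful, because the number of competing jobs can be more than two. The clean way is: $\sum_{\ell: i\in Y_\ell}\score(i;A_\ell) = \score(i;A_{\ell^*(i)}) + \sum_{\ell\neq\ell^*(i): i\in Y_\ell}\score(i;A_\ell) \leq \score(i;A_{\ell^*(i)}) + \sum_{\ell\neq\ell^*(i): i\in Y_\ell}\score(i;A_{\ell^*(i)})$, which is not directly bounded by $2\,\score(i;A_{\ell^*(i)})$ unless $i$ is in at most two of the $Y_\ell$. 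So the real argument must exploit that each $i$ lies in at most two candidate sets per job, and more importantly that $mk \le n$ ensures there is enough room — but the cleanest route is probably a different pairing, or noting that the sets $Y_\ell$ need not be highly overlapping. Actually the factor $1/2$ should come from the structure of $\algs$ itself: each $Y_\ell$ is the image of $k$ candidate-set winners, and one can split the $k$ winning slots across the two candidacies. I would instead resolve conflicts slot-by-slot: $\algs_{\tilde k}$ produces, for job $\ell$, winners $i_\ell^1,\dots,i_\ell^{\tilde k}$, and the bound of \Cref{lem:bound-nice-n-k} uses $\sum_{p}\hat\score_{S_1^p}(i_\ell^p) \ge (1/b)\score(O_\ell)$; pair up the two candidacies of each agent. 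The decisive lemma to prove is therefore: one can choose, for each job, which of its two candidacies of each agent to "keep" so that across jobs every agent is kept at most once, losing at most a factor $1/2$ relative to $\sum_\ell\sum_p\hat\score_{S_1^p}(i_\ell^p)$; this is where I expect to spend the most effort, and I would formalize it by a direct counting/averaging argument over the $2\tilde k$ slots per job together with the disjointness of the optimal sets $O_\ell$.
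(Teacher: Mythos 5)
There is a genuine gap, and it is exactly at the point you flag yourself: the conflict-resolution loss. If you run the $(n,k)$-selection mechanism independently on each $A_\ell$, nothing prevents a single agent from being tentatively selected for \emph{all} $m$ jobs (an agent who is highly voted in every job can win one of its two candidacies in every per-job run). Keeping such an agent only in its best job then retains only a $1/|\{\ell : i \in Y_\ell\}|$ fraction of its tentative contribution, i.e.\ a $1/m$ loss in the worst case, so the chain ``final value $\geq \tfrac12\sum_\ell \score(Y_\ell;A_\ell) \geq \tfrac12\tilde\alpha\,\score(\opt_k(\bfA))$'' breaks at its first link. Your proposed repair (the ``decisive lemma'': choose which of the two candidacies to keep per job so that each agent is kept at most once globally, losing only a factor $1/2$) is precisely what is missing, and it is not established by your slot-by-slot sketch; in the independent-per-job framework it is not available, because the same agent can occupy a winning slot for many different jobs, so no choice among its two candidacies caps its multiplicity at two across jobs. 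The paper resolves this by changing the mechanism, not the accounting: within each partition $p$ it selects $m$ \emph{pairwise distinct} agents from the single candidate set $S_2^p$, one per job, maximizing the total modified score (\Cref{alg:assign-k}). This joint per-partition choice gives the key inequality that the slot-$p$ total dominates the modified score of any single agent--job pair in that partition, and, since each agent lies in exactly two candidate sets, it can be tentatively assigned to at most two jobs; the final step keeps the better of these two, so ``max of two $\geq$ half their sum'' yields the clean factor $1/2$ independent of $m$ (\Cref{lem:bound-nice-n-k-assignment}). The averaging over the robust partition is then applied to $\bigcup_\ell {(\opt_k(\bfA))}_\ell$, which is where the hypothesis $mk\leq n$ enters.

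Two smaller points. Your reduction to the nice parameters $(\tilde n,\tilde k)$ and the impartiality argument are fine in spirit (the paper does the analogous extension in \Cref{lem:bounds-nice-not-nice-n-k-assignment}), but note the direction of the inequality: $\tilde k = k - k\bmod 2 \leq k$, so the optimal sets $O_\ell$ may have size \emph{larger} than $\tilde k$, and one must drop the $k-\tilde k$ lowest-scored agents per job, which is exactly what costs the factor $\tilde k/k$; your claim ``each of size at most $k\leq\tilde k$'' has this backwards. Also, any fix along your lines must re-verify impartiality of the modified conflict/slot rule; the paper's rule compares $\score(j;A_\ell)$ over at most two jobs and tie-breaks by job index, which ignores row $j$ and therefore preserves impartiality.
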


The main ingredient of the proof is an adaptation of our mechanism from \cref{sec:selection} that selects from each partition not one but $m$~many agents: one for each set $\ell\in [m]$.
We leave the partitioning step unchanged and, for the second step, assign~$m$ agents from each candidate set to different jobs in a way that the score obtained for each partition is maximized.
In case an agent is assigned to two different jobs, we assign it to the one for which it receives the highest number of votes.
The adapted procedure is formally described in \Cref{alg:assign-k}; we refer to it as $\alga_k$ and denote its output by $\alga_{k}(\bfA)$ for a given input tuple of matrices $\bfA\in \calA^m_n$.

\begin{algorithm}[t]
    \SetAlgoNoLine{}
    \KwIn{tuple of weight matrices $\bfA\in \calA^m_n$}
    \KwOut{assignment $\bfX\in \calX_k$}
    let $((S^1_1,S^1_2),\ldots,(S^k_1,S^k_2)) =\calS(n,k)$\;
    \For{$j\in [n]$}{
        let $\{l(j), r(j)\} = \{p \in [k]: j\in S^{p}_2\}$ with $l(j)<r(j)$\;
        \For{$\ell\in [m]$}{
            define $\hat{\score}_{S^{l(j)}_1}(j;A_\ell) \xleftarrow{} \score_{S^{l(j)}_1}(j;A_\ell)$; \\
            define $\hat{\score}_{S^{r(j)}_1}(j;A_\ell) \xleftarrow{} \score_{S^{r(j)}_1 \setminus S^{l(j)}_1}(j;A_\ell)$
        }
    }
    initialize $X_\ell \xleftarrow{} \emptyset$ for each $\ell\in[m]$\;
    \For{$p\in [k]$}{
        take $\bfx^p = \argmax_{\substack{\bfv\in {(S^p_2)}^m:  v_\ell \not=  v_{\ell'} \forall \ell\not=\ell'}}\, \big(\sum_{\ell\in[m]} \hat{\score}_{S^p_1}( v_\ell;A_\ell), v_1, \ldots, v_m \big)$\;
        update $X_\ell\xleftarrow{} X_\ell\cup \{x^p_\ell\}$ for each $\ell\in [m]$
    }
    \For{$j\in [n]$}{
        \If{$L(j) \defas \{\ell\in [m]\colon j\in X_\ell\}$ is such that $|L(j)|= 2$}{
            let $\hat{\ell} = \argmin_{\ell\in L(j)}\,(\score(j;A_\ell),\ell)$\;
            $X_{\hat{\ell}} \leftarrow X_{\hat{\ell}} \setminus \{j\}$
            }
    }
    {\bf return} $\bfX$
    \caption{$\alga_{k}(\bfA)$}%
    \label{alg:assign-k}
\end{algorithm}

Impartiality of this mechanism follows from a similar reasoning as in the proof of \cref{thm:main}: whenever the vote of an agent is taken into account, the agent is not part of the candidate set.
The approximation guarantee makes use of a detailed analysis of the case $b\defas 2n/k\in \N$ and $b\leq k/2\in \N$, which is somewhat more intricate than the analysis in \Cref{sec:selection}.
We consider subsets of agents that are assigned to any job in the optimal assignment and are not mutual contenders.
We then use the key fact that, when considering the two partitions in which some agent~$i$ is in the candidate set, the mechanism assigns agents in a way that the sum of votes of the assigned agents in both partitions is at least the number of votes that $i$~receives for any job.
Exploiting the robust partitioning structure as before allows us to take the best of these subsets and conclude via an averaging argument.
Here we lose an additional factor of $1/2$ due to the possibility that an agent is initially assigned to two jobs.
The extension to general values $n,~m$, and $k$ is then analogous to that of \Cref{sec:selection}.
\Cref{fig:example-mechanism-assignment} illustrates a possible execution of $\alga_6$ on an instance $A\in \calA_9^2$.

\begin{figure}[t]
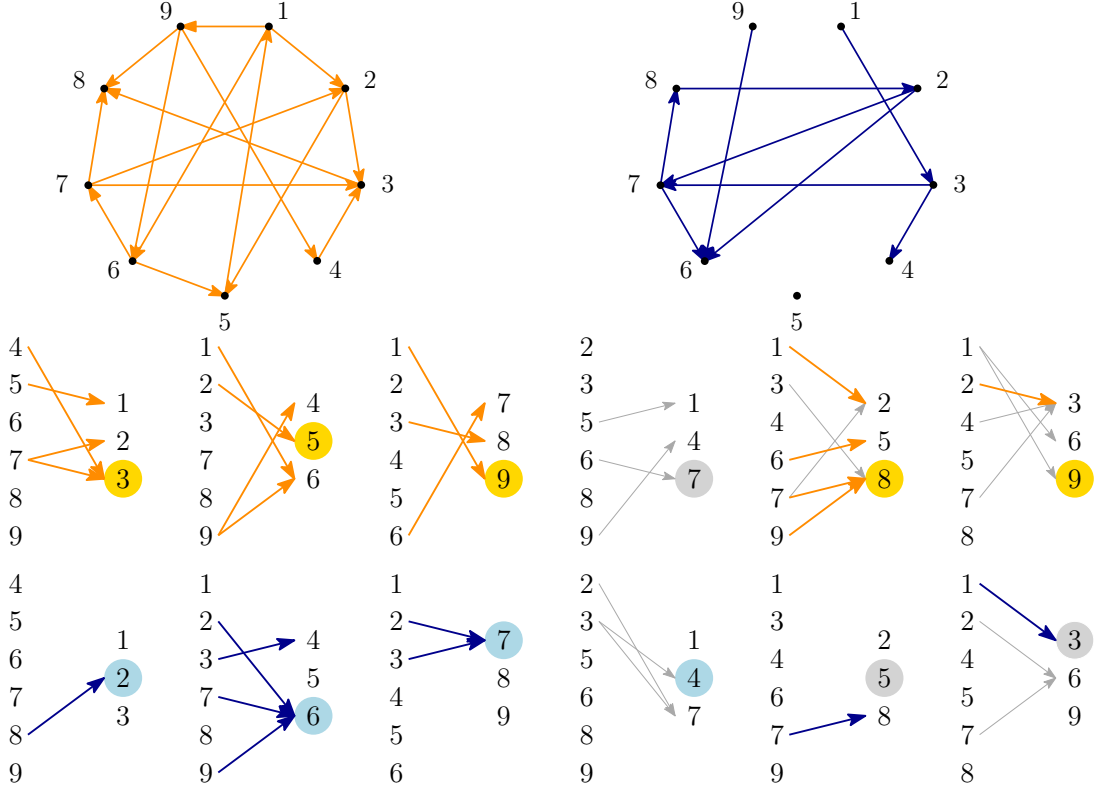

    \centering
    \begin{subfigure}{.35\textwidth}
        \includegraphics[height=44mm,page=6]{example.pdf}
    \end{subfigure}
    \hspace{17mm}
    \begin{subfigure}{.35\textwidth}
        \includegraphics[height=44mm,page=8]{example.pdf}
    \end{subfigure}

    \begin{subfigure}{.9\textwidth}
        \includegraphics[width=\textwidth,page=5]{example.pdf}
    \end{subfigure}

    \vspace{3mm}
    \begin{subfigure}{.9\textwidth}
        \includegraphics[width=\textwidth,page=7]{example.pdf}
    \end{subfigure}
    \caption{%
        Example of $\alga_6(\bfA)$ for $\bfA \in \calA_9^2$.
        The votes carry unit weight and are shown in orange for job~$1$ and in blue for job~$2$.
        The partition system is depicted below, separately for each job, with omitted votes drawn in gray.
        For each partition and each job, the selected agent is highlighted in light orange or light blue, if it remains assigned to the corresponding job, and in gray, if it is unassigned due to a duplicate preliminary assignment of the agent to distinct jobs.
        This affects agents $3$,~$5$, and~$7$:
        Since agent~$3$ has three votes for job~$1$ and only one vote for job~$2$, it is unassigned from job~$2$.
        Agent~$5$ has two votes for job~$1$ and no votes for job~$2$, so it is unassigned from job~$2$.
        Likewise, agent~$7$ has one vote for job~$1$ and two votes for job~$2$, and is thus unassigned from job~$1$.
        It can be confirmed that \(\score(\alga_6(\bfA)) = 16\) and \(\score(\opt_6(\bfA)) = 18\) (reassign agents $1$~and~$2$ to job~$1$); the multiplicative guarantee provided by \Cref{lem:bound-nice-n-k-assignment} for this instance is $1/6$.
    }%
    \label{fig:example-mechanism-assignment}
    \Description{%
        In the top left and top right panels, two digraphs show the unit-weight votes of nine agents concerning two distinct jobs.
        Concerning the first job, the agents vote for the same agents as earlier:
        Agent 1 votes for agents 2, 6, and 9.
        Agent 2 for agents 3 and 5.
        Agent 3 for agent 8.
        Agent 4 for agent 3.
        Agent 5 for agent 1.
        Agent 6 for agents 5 and 7.
        Agent 7 for agents 2, 3, and 8.
        Agent 8 casts no votes.
        Agent 9 votes for agents 4, 6, and 8.
        Concerning the second job, the agents vote more sparsely:
        Agent 1 votes for agent 3.
        Agent 2 for agents 6 and 7.
        Agent 3 for agents 4 and 7.
        Agents 4, 5, and 6 cast no votes.
        Agent 7 votes for agents 6 and 8.
        Agent 8 for agent 2.
        Agent 9 for agent 6.
        In two rows below the digraphs, a partition system is shown twice, once for each job.
        Each row displays only the votes pertinent to the corresponding job.
        Again, the agents that appear on the right sides of the partitions are {1, 2, 3}, {4, 5, 6}, {7, 8, 9}, {1, 4, 7}, {2, 5, 8}, and {3, 6, 9}.
        In the partition system corresponding to the first job, the winners are agents 3, 5, 9, 7, 8, and 9.
        Agent 7 is marked gray to denote that it is unassigned.
        In the partition system corresponding to the second job, the winners are agents 2, 6, 7, 4, 5, and 3.
        Here, agents 3 and 5 are marked gray.
    }
\end{figure}

In this section, whenever $n,~m,~k$, and $\bfA\in \calA^m_n$ are fixed, we use $((S^1_1,S^1_2),\ldots,(S^k_1,S^k_2))$, $l(j)$, $r(j)$, $\hat{\score}_{S^p_1}(j;A_\ell),~\bfx^p$, and $\bfX$ for each $p\in [k]$, $\ell\in [m]$, and $j\in [n]$ to refer to the objects defined in $\alga_k$ for input $\bfA$.
We specify the input tuple of weight matrices as an argument when not clear from the context.
The following lemma, which plays an analogous role to \Cref{lem:bound-nice-n-k}, constitutes the main technical ingredient for the proof of \Cref{thm:assignment}.

\begin{lemma}\label{lem:bound-nice-n-k-assignment}
    Let $n,m,k\in \N$ with $k<n$ be such that $b\defas 2n/k \in \N$ and $m\leq b\leq k/2\in \N$.
    Then, $\alga_{k}$ is an impartial and $1/(2b)$-optimal $(n,m,k)$-assignment mechanism.
\end{lemma}

\begin{proof}
    We consider $n$ and $k$ as in the statement.
    We first note that $\alga_{k}$ is well-defined as we have $|\{p \in [k]: j\in S^{p}_2\}| = 2$ for every $j\in [n]$ and $b=2n/k\geq m$.
    On the other hand, since $x^p_\ell$ is a single agent for every $p\in [k]$ and $\ell\in[m]$, and $X_\ell\subseteq \bigcup_{p\in [k]}\{x^p_\ell\}$ for each $\ell\in [m]$, we have that $|X_\ell|\leq k$ for every $\ell\in [m]$.
    Further, due to the last step we have that for every $\bfA\in \calA^m_n$ and every $j\in [n]$ it holds that $|\{\ell\in [m]: j\in {(\alga_k(\bfA))}_\ell\}|\leq 1$, thus $\alga_{k}$ returns a feasible assignment.

    To see that $\alga_{k}$ is impartial, let $\bfA, \bfA'\in \calA^m_n$ and $j\in [n]$ be such that ${(A_\ell)}_{-j}={(A'_\ell)}_{-j}$ holds for every $\ell\in [m]$.
    Suppose $j\in {(\alga_{k}(\bfA))}_{\hat{\ell}}$ for some $\hat{\ell}\in [m]$.
    From the definition of the mechanism, we have that there is a $p\in [k]$ such that $j = x^p_{\hat{\ell}}(\bfA)$ and such that, if $j=x^q_{\ell}(\bfA)$ for $\ell\in [m]\setminus \{\hat{\ell}\}$ and $q\in \{l(j),r(j)\}\setminus \{p\}$, then $(\score(j;A_{\hat{\ell}}),\hat{\ell}) > (\score(j;A_{\ell}),\ell)$.
    Since $j\in S^p_2$ and ${(A_{\ell})}_{-j}={(A'_{\ell})}_{-j}$ for every $\ell\in [m]$, we have that $\hat{\score}_{S^p_1}(i;A_\ell) = \hat{\score}_{S^p_1}(i;A'_\ell)$ for every $p\in \{l(j),r(j)\}$, every $\ell\in[m]$, and every $i\in [n]$.
    Therefore, since the partial assignment $\bfx^p(\bfA)$ is defined as
    \[
        \bfx^p(\bfA) = \argmax_{\substack{\bfv\in {(S^p_2)}^m: v_\ell \not= v_{\ell'} \forall \ell\not=\ell'}} \Bigg(\sum_{\ell\in[m]} \hat{\score}_{S^p_1}(v_\ell;A_\ell), v_1, \ldots, v_m \Bigg),
    \]
    we obtain that $\bfx^p(\bfA)=\bfx^p(\bfA')$  for $p\in \{l(j),r(j)\}$.
    In particular, this yields $j = x^p_{\hat{\ell}}(\bfA')$ and that, if $j=x^q_{\ell}(\bfA')$ for $\ell\in [m]\setminus \{\hat{\ell}\}$ and $q\in \{l(j),r(j)\}\setminus \{p\}$, then $(\score(j;A'_{\hat{\ell}}),\hat{\ell}) > (\score(j;A'_{\ell}),\ell)$.
    Thus, $j\in {(\alga_{k}(\bfA'))}_{\hat{\ell}}$.
    Exchanging the roles of $\bfA$ and $\bfA'$ in the previous argument, we obtain that if $j\in {(\alga_{k}(\bfA'))}_{\hat{\ell}}$, then also $j\in {(\alga_{k}(\bfA))}_{\hat{\ell}}$.
    This yields ${(\alga_{k}(\bfA))}_{\hat{\ell}}\cap \{j\} = {(\alga_{k}(\bfA'))}_{\hat{\ell}} \cap \{j\}$.
    Since this reasoning is valid for all $\hat{\ell}\in [m]$, we conclude that ${(\alga_{k}(\bfA))}_\ell\cap \{j\} = {(\alga_{k}(\bfA'))}_\ell \cap \{j\}$ holds for every $\ell\in [m]$.

    For the remainder of this proof, we let $\bfA\in \calA_n$ be an arbitrary tuple of weight matrices.
    We start by observing that, for every $j\in [n]$ and $\ell\in [m]$,
    \begin{equation}
        \hat{\score}_{S^{l(j)}_1}(j; A_\ell) + \hat{\score}_{S^{r(j)}_1}(j; A_\ell) = \score_{S^{l(j)}_1}(j; A_\ell) + \score_{S^{r(j)}_1 \setminus S^{l(j)}_1}(j;A_\ell) = \score(j;A_\ell),\label{eq:sum-modified-indegrees-assignment}
    \end{equation}
    since for every $j\in [n]$, property~\ref{lem:partitions-2} of \Cref{lem:partitions} implies $S^{l(j)}_1\cup S^{r(j)}_1 = [n]\setminus \{j\}$.
    Furthermore, the definition of $\bfx^p$ yields
    \begin{equation}\label{eq:selected-vertex-assignment}
        \sum_{\ell\in [m]}\hat{\score}_{S^p_1}(x^p_\ell;A_\ell) \geq \hat{\score}_{S^p_1}(j;A_{\hat{\ell}})
    \end{equation}
    for every \(p\in [k]\), \(\hat{\ell}\in[m]\), and \(j\in S^p_2\).
    To see this, assume to the contrary that~\eqref{eq:selected-vertex-assignment} were not true for some $p\in[k],~ \hat{\ell}\in[m]$, and $j\in S^p_2$.
    Then, taking an alternative partial assignment $\bfz^p\in {(S^p_2)}^m$ defined as $z^p_{\hat{\ell}}=j$ and, for $\ell\in [m]\setminus\{\hat{\ell}\}$, $z^p_\ell$ equal to an arbitrary number in $[n]$ such that $z^p_{\ell}\not=z^p_{\ell'}$ for every $\ell,\ell'\in [m]$ with $\ell\not=\ell'$, we would obtain
    \[
        \sum_{\ell\in[m]}\hat{\score}_{S^p_1}(z^p_\ell;A_\ell) \geq \hat{\score}_{S^p_1}(j;A_{\hat{\ell}}) > \sum_{\ell\in [m]}\hat{\score}_{S^p_1}(x^p_\ell;A_\ell).
    \]
    However, this contradicts the definition of $\bfx^p$.
    Given these two facts, we claim that
    \[
        \sum_{\ell\in[m]}\big(\hat{\score}_{S^{l(j)}_1}(x^{l(j)}_\ell; A_\ell) + \hat{\score}_{S^{r(j)}_1}(x^{r(j)}_\ell; A_\ell)\big) \geq \max_{\ell\in [m]}\score(j;A_\ell) \label{eq:indegree-two-sets-assignment}
    \]
    for every \(j \in [n]\).
    To prove this, we fix $j\in [n]$ and observe that, for each $p\in \{l(j),r(j)\}$,
    inequality~\eqref{eq:selected-vertex-assignment} directly implies
    \[
        \sum_{\ell\in[m]} \hat{\score}_{S^p_1}(x^p_\ell; A_\ell) \geq \max_{\ell\in[m]} \hat{\score}_{S^p_1}(j;A_\ell).
    \]
    Therefore,
    \begin{align*}
        \sum_{\ell\in[m]}\big(\hat{\score}_{S^{l(j)}_1}(x^{l(j)}_\ell; A_\ell) + \hat{\score}_{S^{r(j)}_1}(x^{r(j)}_\ell; A_\ell)\big)
        & \geq \max_{\ell\in[m]} \hat{\score}_{S^{l(j)}_1}(j;A_\ell) + \max_{\ell\in[m]} \hat{\score}_{S^{r(j)}_1}(j;A_\ell) \\
        & \geq \max_{\ell\in [m]}\score(j;A_\ell),
    \end{align*}
    where the last inequality follows from equality~\eqref{eq:sum-modified-indegrees-assignment}.
    This concludes the proof of inequality~\eqref{eq:indegree-two-sets-assignment}.

    We next obtain a second inequality needed to conclude the approximation guarantee.
    Denoting by $\selected \defas \bigcup_{\ell\in [m]}{(\alga_k(\bfA))}_\ell$ the set of selected agents, for each $j\in \selected$ we consider the set
    \[
        L(j) \defas \left\{ \ell\in [m] ~\middle|~ j=x^{l(r)}_\ell \text{ or } j=x^{r(j)}_\ell \right\}
    \]
    containing the jobs $\ell$
    to which $j$ has been assigned to before the last $\mathbf{for}$ loop.
    Note that $|L(j)|\in \{1,2\}$ for every $j\in \selected$.
    We now observe that
    \[
        \score(\alga_{k}(\bfA)) = \sum_{j\in \selected} \max_{\ell\in L(j)} \score(j;A_\ell)
        \geq \frac{1}{2} \sum_{j\in \selected} \, \sum_{\ell\in L(j)} \score(j;A_\ell)
        = \frac{1}{2} \sum_{\ell\in [m]}\sum_{p\in [k]} \hat{\score}_{S^p_1}(x^p_\ell;A_\ell). \label{eq:indegree-alg-assignment}
    \]
    Indeed, the first equality follows from the last $\mathbf{for}$ loop in the definition of $\alga_k$, the inequality from the fact that a maximum of two values is at least their average, and the last equality from equality~\eqref{eq:sum-modified-indegrees-assignment} and the definition of $L(j)$.

    We now use inequalities\ \eqref{eq:indegree-two-sets-assignment}~and~\eqref{eq:indegree-alg-assignment} to conclude the bound stated in the lemma.
    For each $j\in \bigcup_{\ell\in[m]}{(\opt_k(\bfA))}_\ell$, we define $\ell(j) \defas \hat{\ell}$ for $\hat{\ell}\in [m]$ such that \(j\in {(\opt_k(\bfA))}_{\hat{\ell}}\).
    We further let $b\defas 2n/k$.
    Since $\big| \bigcup_{\ell\in[m]}{(\opt_k(\bfA))}_\ell \big| = km\leq n$, we know from property~\ref{lem:partitions-4} of \Cref{lem:partitions}, that there is a partition \({\dot\bigcup}_{\indx \in [b]} U_{\indx} = \bigcup_{\ell\in[m]}{(\opt_k(\bfA))}_\ell\) such that \(i \in S_2^p\) implies \(j \not\in S_2^p\) for all \(\indx \in [b]\), \(i, j \in U_{\indx}\) with $i\not=j$, and \(p \in [k]\).
    We obtain that, for every \(\indx \in [b]\),
    \begin{align}
        \score(\alga_{k}(\bfA)) & \geq \frac{1}{2} \sum_{\ell\in [m]}\sum_{p\in [k]} \hat{\score}_{S^p_1}(x^p_\ell;A_\ell) \nonumber \\
        & \geq \frac{1}{2} \sum_{\ell\in [m]} \sum_{j\in U_{\indx}} \left( \hat{\score}_{S^{l(j)}_1}(x^{l(j)}_\ell;A_\ell) + \hat{\score}_{S^{r(j)}_1}(x^{r(j)}_\ell;A_\ell) \right) \nonumber \\
        & \geq \frac{1}{2}\sum_{j\in U_{\indx}} \score(j;A_{\ell(j)}), \label{eq:indegree-clique-assignment}
    \end{align}
    where the first inequality follows from inequality~\eqref{eq:indegree-alg-assignment}, the second one from the fact that $\{l(i),r(i)\}\cap \{l(j),r(j)\}=\emptyset$ for every $\ell\in [b]$ and every $i,j\in U_{\ell}$ with $i\not= j$, and the last one from inequality~\eqref{eq:indegree-two-sets-assignment}.
    This yields
    \begin{align*}
        \score(\alga_{k}(\bfA)) & \geq \frac{1}{2} \max_{\indx\in [b]} \sum_{j\in U_{\indx}} \score(j;A_{\ell(j)}) \\
        & \geq \frac{1}{2b} \sum_{\indx\in [b]} \sum_{j\in U_{\indx}} \score(j;A_{\ell(j)})\\
        & = \frac{1}{2b} \sum_{\ell\in[m]} \score\left({(\opt_k(\bfA))}_\ell; A_\ell\right)
        = \frac{1}{2b}\score(\opt_k(\bf{A})),
    \end{align*}
    where the first inequality follows from~\eqref{eq:indegree-clique-assignment}, the second inequality from the observation that the maximum of a set of values is at least their average, the first equality from the fact that \( {\{U_\indx\}}_{\indx\in [b]} \) is a partition of $\bigcup_{\ell\in [m]}{(\opt_k(\bfA))}_\ell$ together with the definition of $\ell(j)$ for each $j$ in this set, and the last equality from the definition of $\score(\opt_k(\bf{A}))$. We conclude that \(\alga_{k}\) is $\alpha$-optimal for
    \[
        \frac{\score(\alga_{k}(\bfA))}{\score(\opt_k(\bfA))} \geq \frac{1}{2b}=\alpha. \qedhere
    \]
\end{proof}

The next lemma, analogous to \Cref{lem:bounds-nice-not-nice-n-k}, allows us to extend the bound given by \Cref{lem:bound-nice-n-k-assignment} to the case when $n$ and $k$ do not satisfy the conditions of \Cref{lem:bound-nice-n-k-assignment}.
Given $\tilde{n}, m, \tilde{k}\in \N$ with $m\tilde{k} < \tilde {n}$, an $(\tilde{n},m,\tilde{k})$-assignment mechanism $\alg$, and $n,k\in \N$ with $k\geq \tilde{k}$, $n\leq \tilde{n}$, $k<n$, and $mk\leq 2n$, this is achieved by the $(n,m,k)$-assignment mechanism that we formally describe as \Cref{alg:extension-alg-k-assignment} and whose output we denote as $\gena_{\alg,k}(\bfA)$ for an input tuple of weight matrices $\bfA\in \calA^m_n$.
\begin{algorithm}[t]
    \SetAlgoNoLine{}
    \KwIn{weight matrices tuple $\bfA\in \calA^m_n$}
    \KwOut{assignment $\bfX\in \calX_k$}
    define $\tilde{\bfA}\in \calA^m_{\tilde{n}}$ such that for each $\ell\in [m]$ we have
    \[
        {(\tilde{A}_\ell)}_{ij} = \begin{cases}
            {(A_\ell)}_{ij} & \text{ if } i,j\in [n],\\
            0 & \text{ otherwise.}
        \end{cases}
    \]
    {\bf return} $\alg(\tilde{\bfA})$
    \caption{$\gena_{\alg,k}(\bfA)$}%
    \label{alg:extension-alg-k-assignment}
\end{algorithm}
This algorithm simply extends $A_\ell$, for each $\ell\in [m]$, to the $\tilde{n}\times\tilde{n}$ matrix $\tilde{A}_\ell$ by adding $\tilde{n}-n$ many all-zero rows and columns, and then applies $\alg$ on $\tilde{A}$.
As before, whenever $\tilde{n},m,~n,~k,~\alg$, and $\bfA\in \calA^m_n$ are fixed, we use $\tilde{\bfA}$ to refer to the object defined in \Cref{alg:extension-alg-k-assignment}
for this input.
In a slight overload of notation, when we consider $\bfA'\in \calA_n$ as input, we write simply $\tilde{\bfA}'$.

\begin{lemma}\label{lem:bounds-nice-not-nice-n-k-assignment}
    Let $\tilde{n}, m, \tilde{k}\in\N$ with $m\tilde{k}\leq 2\tilde{n}$ be such that there exists an impartial and $\tilde{\alpha}$-optimal $(\tilde{n},m,\tilde{k})$-assignment mechanism $\alg$.
    Then, for every $k,n\in \N$ with $k\geq \tilde{k}$, $n\leq \tilde{n}$, $k<n$, and $mk\leq 2n$, $\gena_{\alg,k}$ is an impartial and $\alpha$-optimal $(n,m,k)$-assignment mechanism with $\alpha = (\tilde{k}/k) \tilde{\alpha}$.
\end{lemma}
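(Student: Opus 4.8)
The plan is to transcribe the proof of \Cref{lem:bounds-nice-not-nice-n-k} to the assignment setting, treating impartiality and the approximation guarantee in turn; the only genuinely new ingredient is a job-wise averaging argument that must respect the disjointness constraint built into $\calX_k$. Fix $n, m, k, \tilde{n}, \tilde{k}$ as in the statement and let $\alg$ denote the given impartial and $\tilde{\alpha}$-optimal $(\tilde{n}, m, \tilde{k})$-assignment mechanism. As in \Cref{lem:bounds-nice-not-nice-n-k}, I would read $\gena_{\alg,k}(\bfA)$ as the restriction of $\alg(\tilde{\bfA})$ to the agents in $[n]$: the padding agents $[\tilde{n}] \setminus [n]$ receive and cast zero weight in every $\tilde{A}_\ell$, so this restriction changes neither the total score nor the incidence pattern relevant to impartiality.

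For impartiality, I would observe that the extension in \Cref{alg:extension-alg-k-assignment} merely appends all-zero rows and columns, so that ${(A_\ell)}_{-i} = {(A'_\ell)}_{-i}$ for all $\ell \in [m]$ implies ${(\tilde{A}_\ell)}_{-i} = {(\tilde{A}'_\ell)}_{-i}$ for all $\ell \in [m]$. Applying the impartiality of $\alg$ to $\tilde{\bfA}$, $\tilde{\bfA}'$ and agent $i$ then gives ${(\alg(\tilde{\bfA}))}_\ell \cap \{i\} = {(\alg(\tilde{\bfA}'))}_\ell \cap \{i\}$ for every $\ell \in [m]$, which is exactly the condition required of $\gena_{\alg,k}$.

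For the approximation guarantee, I would fix $\bfA \in \calA^m_n$ and reduce everything to the single inequality
\[
    \frac{1}{k}\,\score(\opt_k(\bfA); \bfA) \;\le\; \frac{1}{\tilde{k}}\,\score\big(\opt_{\tilde{k}}(\tilde{\bfA}); \tilde{\bfA}\big).
\]
To establish it, write $\opt_k(\bfA) = (X_1, \ldots, X_m)$ and, for each $\ell \in [m]$, let $Y_\ell \subseteq X_\ell$ be a set of $\min\{\tilde{k}, |X_\ell|\}$ agents of largest score with respect to $A_\ell$. The $Y_\ell$ are pairwise disjoint subsets of $[n] \subseteq [\tilde{n}]$ of size at most $\tilde{k}$, so $\bfY \defas (Y_1, \ldots, Y_m)$ is a feasible $(\tilde{n}, m, \tilde{k})$-assignment; moreover $\frac{1}{\tilde{k}}\score(Y_\ell; A_\ell) \ge \frac{1}{k}\score(X_\ell; A_\ell)$ for each $\ell$, since the $\tilde{k}$ top-scoring agents of $X_\ell$ have average score at least that of all of $X_\ell$ and $|X_\ell| \le k$ (the case $|X_\ell| \le \tilde{k}$ following from $\tilde{k} \le k$). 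Summing over $\ell$, using $\score(\bfY; \tilde{\bfA}) = \score(\bfY; \bfA)$ and the optimality of $\opt_{\tilde{k}}(\tilde{\bfA})$, yields the displayed inequality. Combining it with the identity $\score(\gena_{\alg,k}(\bfA); \bfA) = \score(\alg(\tilde{\bfA}); \tilde{\bfA})$ and the $\tilde{\alpha}$-optimality of $\alg$ then gives
\[
    \frac{\score(\gena_{\alg,k}(\bfA); \bfA)}{\score(\opt_k(\bfA); \bfA)}
    \;\ge\; \frac{\tilde{k}}{k}\cdot\frac{\score(\alg(\tilde{\bfA}); \tilde{\bfA})}{\score(\opt_{\tilde{k}}(\tilde{\bfA}); \tilde{\bfA})}
    \;\ge\; \frac{\tilde{k}}{k}\,\tilde{\alpha},
\]
with the degenerate case $\score(\opt_k(\bfA); \bfA) = 0$ being vacuous.

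I expect the only real subtlety to be this job-wise averaging step: in the single-job setting of \Cref{lem:bounds-nice-not-nice-n-k} one simply keeps the $\tilde{k}$ highest-voted agents, whereas here I must confirm that truncating each $X_\ell$ independently still yields a feasible assignment — which it does, since only deletions are performed and pairwise disjointness is inherited — and that this does not decrease any per-job average. Everything else is a routine adaptation of the arguments already used for \Cref{lem:bounds-nice-not-nice-n-k,lem:bound-nice-n-k-assignment}.
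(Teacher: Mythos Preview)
Your proposal is correct and follows essentially the same approach as the paper's own proof: both establish impartiality by lifting the row-equality to the padded instance and invoking impartiality of $\alg$, and both prove the key averaging inequality $\frac{1}{k}\score(\opt_k(\bfA)) \le \frac{1}{\tilde{k}}\score(\opt_{\tilde{k}}(\tilde{\bfA}))$ by truncating each $X_\ell$ to its top-$\tilde{k}$ agents and noting that this preserves disjointness while not decreasing per-job averages. Your treatment is in fact slightly more careful than the paper's in handling the cases $|X_\ell| < \tilde{k}$ and $\score(\opt_k(\bfA)) = 0$, and in making explicit the restriction to $[n]$.
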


\begin{proof}
    Let $n$, $m$, $k$, $\tilde{n}$ and $\tilde{k}$ be as in the statement.
    Let also $\alg$ denote the impartial and $\tilde{\alpha}$-optimal $(\tilde{n},m,\tilde{k})$-assignment mechanism.
    In order to see that $\gena_{\alg,k}$ is impartial, let $\bfA,\bfA'\in \calA^m_n$ and let $i\in [n]$ be such that ${(A_\ell)}_{-i}={(A'_\ell)}_{-i}$ for every $\ell\in [m]$.
    This implies ${(\tilde{A}_\ell)}_{-i}={(\tilde{A}'_\ell)}_{-i}$, thus the impartiality of $\alg$ yields that for every $\ell\in [m]$,
    \[
        {(\gena_{\alg,k}(\bfA))}_\ell \cap \{i\}
        = {(\alg(\tilde{\bfA}))}_\ell \cap \{i\}
        = {(\alg(\tilde{\bfA}'))}_\ell \cap \{i\}
        = {(\gena_{\alg,k}(\bfA'))}_\ell \cap \{i\}.
    \]

    To prove the approximation guarantee, we let $\bfA\in \calA^m_n$ be an arbitrary tuple of weight matrices and observe that
    \begin{equation}\label{eq:alpha-tilde1-assignment}
        \frac{\sigma(\gena_{\alg,k}(\bfA))}{\sigma(\opt_{\tilde{k}}(\tilde{\bfA}))} = \frac{\sigma(\alg(\tilde{\bfA}))}{\sigma(\opt_{\tilde{k}}(\tilde{\bfA}))} \geq \tilde{\alpha},
    \end{equation}
    where the equality follows from the definition of $\alga_{\alg,k}$ and the inequality follows from the $\tilde{\alpha}$-optimality of $\alg$.
    On the other hand, as $\tilde{k}\leq k$ and $\sigma(j,\tilde{A}_\ell)=0$ for every $j\not \in [n]$ and every $\ell\in [m]$, we know that
    \[
        \frac{\sigma(\opt_{k}(\bfA))}{k}
        = \frac{1}{k} \max_{\bfX \in \calX_k} \sum_{\ell\in[m]} \score(X_\ell;A_\ell)
        \leq \frac{1}{\tilde{k}} \max_{\bfX \in \calX_{\tilde{k}}} \sum_{\ell\in[m]} \score(X_\ell;A_\ell)
        = \frac{\sigma(\opt_{\tilde{k}}(\tilde{\bfA}))}{\tilde{k}},
    \]
    \ie{} the average score of the $k$ assigned agents of input $A$, under the best assignment, can be no larger than the average score of the $\tilde{k}$ assigned agents of input $\tilde{A}$, under the best assignment.
    To see this, note that we can obtain an assignment where we restrict to at most $\tilde{k}$ agents for each job from $\opt_k$ by deleting the $k-\tilde{k}$ agents per set ${(\opt_k(\bfA))}_\ell$ for each $\ell\in[m]$ that have the lowest score for this job.
    This can only increase the average score of the assignment.
    Plugging this inequality into~\eqref{eq:alpha-tilde1-assignment} concludes the proof as
    \[
        \frac{\sigma(\gena_{\alg,k}(\bfA))}{\sigma(\opt_k(\bfA))} \geq \frac{\tilde{k}}{k} \frac{\sigma(\gena_{\alg,k}(\bfA))}{\sigma(\opt_{\tilde{k}}(\tilde{\bfA}))} \geq \frac{\tilde{k}}{k} \tilde{\alpha}.\qedhere
    \]
\end{proof}

\Cref{thm:assignment} now follows from the last two lemmas.

\begin{proof}[Proof of \Cref{thm:assignment}]
    Let $n,m,k\in \N$ with $1<k<n,~ mk\leq 2n$ and $k-k\bmod 2\geq 2\sqrt{n}$.
    We define
    \[
        \tilde{k}\defas k-k\bmod 2,\quad \tilde{n} \defas \frac{k-k\bmod 2}{2} \left\lceil \frac{2n}{k-k\bmod 2} \right\rceil.
    \]
    It is clear that $\tilde{n},\tilde{k}$ are natural numbers with $\tilde{k}\leq k < n\leq \tilde{n}$, that $m\tilde{k}\leq 2\tilde{n}$, and that
    \[
        b\defas \frac{2\tilde{n}}{\tilde{k}} = \left\lceil \frac{2n}{k-k\bmod 2}\right\rceil\in \N.
    \]
    Moreover, we have that
    \[
        \tilde{n} = \frac{k-k\bmod 2}{2} \left\lceil \frac{2n}{k-k\bmod 2} \right\rceil \leq \frac{k-k\bmod 2}{2} \left\lceil \frac{2\frac{{(k-k\bmod2)}^2}{4}}{k-k\bmod 2} \right\rceil = \frac{\tilde{k}^2}{4},
    \]
    where the inequality follows from the condition $k-k\bmod 2\geq 2\sqrt{n}$ in the statement.
    This yields $b = 2\tilde{n}/\tilde{k} \leq \tilde{k}/2\in \N$.
    By \Cref{lem:bound-nice-n-k-assignment}, this implies that $\alga_{\tilde{k}}$ is an impartial and $\tilde{\alpha}$-optimal $(\tilde{n},m,\tilde{k})$-assignment mechanism with
    \[
        \tilde{\alpha} = \frac{1}{b} = \frac{1}{\left\lceil \frac{2n}{k-k\bmod 2}\right\rceil}.
    \]
    Since $\tilde{n},\tilde{k}\in \N$ are such that $\tilde{k}\leq k$ and $\tilde{n}\geq n$, \Cref{lem:bounds-nice-not-nice-n-k-assignment} implies that $\gena_{\alga_{\tilde{k}},k}$ is an impartial and $\alpha$-optimal $(n,m,k)$-assignment mechanism with
    \[
        \alpha = \frac{\tilde{k}}{k} \tilde{\alpha} = \frac{k-k\bmod 2}{k \left\lceil \frac{2n}{k-k\bmod 2}\right\rceil}. \qedhere
    \]
\end{proof}

\section{Discussion}

In an election among a group of peers, strategy-proofness is a desirable axiom that so far comes at a steep price.
If multiple winners are sought and the selection process must be deterministic, then the proposed impartial mechanism certainly signifies progress:
It allows the selection of more than two popular agents, can aggregate arbitrary ratings beyond the binary choice for or against approval, and improves the performance guarantee in the classical setting of unweighted votes from \(1/k\) to~\(k/(2n)\) for well-behaved combinations of \(k\)~and~\(n\), and to a value in that order in the wider range of \(k \geq 2 \sqrt{n}\).
Yet, this guarantee remains at a level that could be difficult to advertise to participants of the poll:
In the best case, when up to two thirds of the agents may be declared winners, the subset selected by the mechanism is only guaranteed to be supported by one third of the score that the highest-rated subset receives.
We showed that this minimum performance is indeed attained in adverse instances, and already for votes carrying unit weight.
This renders the analysis of the presented mechanism tight in the setting of approval voting.
On the other hand, the best known upper bound on the performance ratio achievable by any deterministic mechanism, \((k - 1)/k\), remains unabatedly optimistic and suggests the possibility of mechanisms that are truly practical.
Narrowing the gap in the approximate optimality thus remains a central challenge in the study of impartial mechanisms, both for deterministic and randomized selection.

We analyzed our mechanism in a rather general setting:
Agents can signal an arbitrary level of approval for any number of peers, possibly none of them.
While the adverse instance presented in \Cref{sec:selection} weighs all votes equally, it is remarkably sparse, with just three out of 72 potential votes being cast.
Two interesting variants of the impartial selection problem studied in the literature are the \emph{no-abstentions} and the \emph{single-nomination} settings, both of which exclude this particular instance.
In the former, each agent is required to cast at least one vote, while in the latter, it must be exactly one.
A slightly more general rule could pose lower and upper bounds on the total nomination weight that must be distributed by each agent; a quite reasonable restriction when weighted votes are allowed.
It is conceivable that our mechanism fares much better in these restricted settings and attains a performance ratio more easily communicated to practitioners.
Likewise, it is possible that the mechanism fares well in expectation when the agents' indices are permuted at random, thus turning it into an inexact and randomized procedure.
Arguably, the property that it processes every vote exactly once may make it more attractive in practice than existing randomized mechanisms that fill the selection budget exactly but omit some votes.
The theoretical analysis of the presented mechanism under these scenarios is another promising direction for further studies.

\section*{Acknowledgements}

Javier Cembrano was supported by the Deutsche Forschungsgemeinschaft (DFG, German Research Foundation) under project number 431465007.
Svenja M.\ Griesbach and Maximilian J.\ Stahlberg were supported by the Deutsche Forschungsgemeinschaft under Germany's Excellence Strategy -- The Berlin Mathematics Research Center MATH+ (EXC-2046/1, project~390685689). 

The authors thank the participants of the Second Workshop on Trends in Algorithmic Discrete Mathematics for fruitful discussions.
In particular, we thank Max Klimm for proposing the problem.
We also thank the anonymous reviewers for suggesting improvements to the article.

\renewcommand{\bibfont}{\setstretch{0.85}}

\bibliographystyle{abbrvnat}
\bibliography{manuscript}

\begin{thebibliography}{28}
\providecommand{\natexlab}[1]{#1}
\providecommand{\url}[1]{\texttt{#1}}
\expandafter\ifx\csname urlstyle\endcsname\relax
  \providecommand{\doi}[1]{doi: #1}\else
  \providecommand{\doi}{doi: \begingroup \urlstyle{rm}\Url}\fi

\bibitem[Alon et~al.(2011)Alon, Fischer, Procaccia, and
  Tennenholtz]{alon2011sum}
N.~Alon, F.~Fischer, A.~Procaccia, and M.~Tennenholtz.
\newblock Sum of us: {S}trategyproof selection from the selectors.
\newblock In \emph{Proceedings of the 13th Conference on Theoretical Aspects of
  Rationality and Knowledge}, pages 101--110, 2011.

\bibitem[Aziz et~al.(2019)Aziz, Lev, Mattei, Rosenschein, and
  Walsh]{aziz2019strategyproof}
H.~Aziz, O.~Lev, N.~Mattei, J.~S. Rosenschein, and T.~Walsh.
\newblock Strategyproof peer selection using randomization, partitioning, and
  apportionment.
\newblock \emph{Artificial Intelligence}, 275:\penalty0 295--309, 2019.

\bibitem[Babichenko et~al.(2020)Babichenko, Dean, and
  Tennenholtz]{babichenko2020incentive}
Y.~Babichenko, O.~Dean, and M.~Tennenholtz.
\newblock Incentive-compatible selection mechanisms for forests.
\newblock In \emph{Proceedings of the 21st ACM Conference on Economics and
  Computation}, pages 111--131, 2020.

\bibitem[Bjelde et~al.(2017)Bjelde, Fischer, and Klimm]{bjelde2017impartial}
A.~Bjelde, F.~Fischer, and M.~Klimm.
\newblock Impartial selection and the power of up to two choices.
\newblock \emph{ACM Transactions on Economics and Computation}, 5\penalty0
  (4):\penalty0 1--20, 2017.

\bibitem[Bousquet et~al.(2014)Bousquet, Norin, and Vetta]{bousquet2014near}
N.~Bousquet, S.~Norin, and A.~Vetta.
\newblock A near-optimal mechanism for impartial selection.
\newblock In \emph{Proceedings of the 10th International Conference on Web and
  Internet Economics}, pages 133--146, 2014.

\bibitem[Caragiannis et~al.(2022)Caragiannis, Christodoulou, and
  Protopapas]{caragiannis2022impartial}
I.~Caragiannis, G.~Christodoulou, and N.~Protopapas.
\newblock Impartial selection with additive approximation guarantees.
\newblock \emph{Theory of Computing Systems}, 66:\penalty0 721--742, 2022.

\bibitem[Caragiannis et~al.(2023)Caragiannis, Christodoulou, and
  Protopapas]{caragiannis2023prior}
I.~Caragiannis, G.~Christodoulou, and N.~Protopapas.
\newblock Impartial selection with prior information.
\newblock In \emph{Proceedings of the ACM Web Conference}, pages 3614--3624,
  2023.

\bibitem[Cembrano et~al.(2022{\natexlab{a}})Cembrano, Fischer, Hannon, and
  Klimm]{cembrano2022impartial}
J.~Cembrano, F.~Fischer, D.~Hannon, and M.~Klimm.
\newblock Impartial selection with additive guarantees via iterated deletion.
\newblock In D.~M. Pennock, I.~Segal, and S.~Seuken, editors, \emph{Proceedings
  of the 23rd ACM Conference on Economics and Computation}, pages 1104--1105,
  2022{\natexlab{a}}.

\bibitem[Cembrano et~al.(2022{\natexlab{b}})Cembrano, Fischer, and
  Klimm]{cembrano2022optimal}
J.~Cembrano, F.~Fischer, and M.~Klimm.
\newblock Optimal impartial correspondences.
\newblock In \emph{International Conference on Web and Internet Economics},
  pages 187--203, 2022{\natexlab{b}}.

\bibitem[Cembrano et~al.(2023)Cembrano, Fischer, and
  Klimm]{cembrano2023improved}
J.~Cembrano, F.~Fischer, and M.~Klimm.
\newblock Improved bounds for single-nomination impartial selection.
\newblock In \emph{Proceedings of the 24th ACM Conference on Economics and
  Computation}, 2023.

\bibitem[De~Clippel et~al.(2008)De~Clippel, Moulin, and
  Tideman]{de2008impartial}
G.~De~Clippel, H.~Moulin, and N.~Tideman.
\newblock Impartial division of a dollar.
\newblock \emph{Journal of Economic Theory}, 139\penalty0 (1):\penalty0
  176--191, 2008.

\bibitem[Dhull et~al.(2022)Dhull, Jecmen, Kothari, and
  Shah]{dhull2022strategyproofing}
K.~Dhull, S.~Jecmen, P.~Kothari, and N.~B. Shah.
\newblock Strategyproofing peer assessment via partitioning: The price in terms
  of evaluators’ expertise.
\newblock In \emph{Proceedings of the AAAI Conference on Human Computation and
  Crowdsourcing}, volume~10, pages 53--63, 2022.

\bibitem[Edelman and Por(2021)]{edelman2021new}
P.~H. Edelman and A.~Por.
\newblock A new axiomatic approach to the impartial nomination problem.
\newblock \emph{Games and Economic Behavior}, 130:\penalty0 443--451, 2021.

\bibitem[Fischer and Klimm(2015)]{fischer2015optimal}
F.~Fischer and M.~Klimm.
\newblock Optimal impartial selection.
\newblock \emph{SIAM Journal on Computing}, 44\penalty0 (5):\penalty0
  1263--1285, 2015.

\bibitem[Holzman and Moulin(2013)]{holzman2013impartial}
R.~Holzman and H.~Moulin.
\newblock Impartial nominations for a prize.
\newblock \emph{Econometrica}, 81\penalty0 (1):\penalty0 173--196, 2013.

\bibitem[Kahng et~al.(2018)Kahng, Kotturi, Kulkarni, Kurokawa, and
  Procaccia]{kahng2018ranking}
A.~Kahng, Y.~Kotturi, C.~Kulkarni, D.~Kurokawa, and A.~D. Procaccia.
\newblock Ranking wily people who rank each other.
\newblock In \emph{Proceedings of the 32nd AAAI Conference on Artificial
  Intelligence}, pages 1087--1094, 2018.

\bibitem[K\H{o}nig(1916)]{koenig1916}
D.~K\H{o}nig.
\newblock Gr\'afok \'es alkalmaz\'asuk a determin\'ansok \u{Z}s a halmazok
  elm\'elet\'ere.
\newblock \emph{Matematikai \'es Term\'eszettudom\'anyi \'Ertes\'it\H{o}},
  34:\penalty0 104--119, 1916.
\newblock English: Graphs and their applications to the theory of determinants
  and sets.

\bibitem[Kurokawa et~al.(2015)Kurokawa, Lev, Morgenstern, and
  Procaccia]{kurokawa2015impartial}
D.~Kurokawa, O.~Lev, J.~Morgenstern, and A.~D. Procaccia.
\newblock Impartial peer review.
\newblock In \emph{Proceedings of the 24th International Joint Conference on
  Artificial Intelligence}, pages 582--588, 2015.

\bibitem[Lev et~al.(2023)Lev, Mattei, Turrini, and Zhydkov]{lev2023peer}
O.~Lev, N.~Mattei, P.~Turrini, and S.~Zhydkov.
\newblock {PeerNomination}: A novel peer selection algorithm to handle
  strategic and noisy assessments.
\newblock \emph{Artificial Intelligence}, 316:\penalty0 103843, 2023.

\bibitem[Mackenzie(2020)]{MacK20a}
A.~Mackenzie.
\newblock An axiomatic analysis of the papal conclave.
\newblock \emph{Economic Theory}, 69:\penalty0 713--743, 2020.

\bibitem[Mattei et~al.(2021)Mattei, Turrini, and
  Zhydkov]{mattei2021peernomination}
N.~Mattei, P.~Turrini, and S.~Zhydkov.
\newblock {PeerNomination}: relaxing exactness for increased accuracy in peer
  selection.
\newblock In \emph{Proceedings of the 29th International Conference on
  International Joint Conferences on Artificial Intelligence}, pages 393--399,
  2021.

\bibitem[Niemeyer and Preusser(2022)]{niemeyer2022simple}
A.~Niemeyer and J.~Preusser.
\newblock Simple allocation with correlated types.
\newblock Technical report, Working Paper, 2022.

\bibitem[Olckers and Walsh(2022)]{olckers2022manipulation}
M.~Olckers and T.~Walsh.
\newblock Manipulation and peer mechanisms: A survey.
\newblock \emph{arXiv preprint arXiv:2210.01984}, 2022.

\bibitem[Tamura and Ohseto(2014)]{tamura2014impartial}
S.~Tamura and S.~Ohseto.
\newblock Impartial nomination correspondences.
\newblock \emph{Social Choice and Welfare}, 43\penalty0 (1):\penalty0 47--54,
  2014.

\bibitem[Vizing(1964)]{vizing1964}
V.~G. Vizing.
\newblock Ob otsenke khromaticheskogo klassa \(p\)-grafa.
\newblock \emph{Diskretnyy Analiz}, 3:\penalty0 25--30, 1964.
\newblock English: On an estimate of the chromatic class of a \(p\)-graph.

\bibitem[Xu et~al.(2019)Xu, Zhao, Shi, Zhang, and Shah]{xu2019strategyproof}
Y.~Xu, H.~Zhao, X.~Shi, J.~Zhang, and N.~B. Shah.
\newblock On strategyproof conference peer review.
\newblock In \emph{Proceedings of the 28th International Joint Conference on
  Artificial Intelligence}, pages 616--622, 2019.

\bibitem[Zhang et~al.(2021)Zhang, Zhang, and Zhao]{zhang2021incentive}
X.~Zhang, Y.~Zhang, and D.~Zhao.
\newblock Incentive compatible mechanism for influential agent selection.
\newblock In \emph{Algorithmic Game Theory: 14th International Symposium, SAGT
  2021, Aarhus, Denmark, September 21--24, 2021, Proceedings 14}, pages 79--93,
  2021.

\bibitem[Zhao et~al.(2023)Zhao, Zhang, and Zhao]{zhao2023incentive}
Y.~Zhao, Y.~Zhang, and D.~Zhao.
\newblock Incentive-compatible selection for one or two influentials.
\newblock In \emph{Proceedings of the 32nd AAAI Conference on Artificial
  Intelligence}, pages 2931--2938, 2023.

\end{thebibliography}

\end{document}